\newtheorem{theorem}{Theorem}{}
\newtheorem{lemma}{Lemma}{}
{}
\newtheorem{remark}{Remark}{}
\begin{document}
\tikzset{
block/.style = {draw, fill=white!20, rectangle, thin, minimum height=2.0em, minimum width=1.50em},
tmp/.style  = {coordinate}, 
sum/.style= {draw, fill=white!20, circle, node distance=.5cm},
tip/.style = {->, >=stealth', thin, dashed},
input/.style = {coordinate},
output/.style= {coordinate},
pinstyle/.style = {pin edge={to-,thin,black}},
startstop/.style = {draw, rounded rectangle, text centered, draw=black,thick},
io/.style = {trapezium, trapezium left angle=70, trapezium right angle=110, text centered},
process/.style = {rectangle, text centered, draw=black,thick},
decision/.style = {diamond, text centered, draw=black,thick},
arrow/.style = {-{Stealth[scale=1.2]},rounded corners,thick}
}
% Do not put math or special symbols in the title.
% \title{A Robust Input-Output Approach to Control of LPV Systems with Norm-Bounded and Time-Delay Uncertainties}
\title{Scaled Small-Gain Approach to Robust Control of LPV Systems with Uncertain Varying Delay*}
%
%
% author names and IEEE memberships
% note positions of commas and nonbreaking spaces ( ~ ) LaTeX will not break
% a structure at a ~ so this keeps an author's name from being broken across
% two lines.
% use \thanks{} to gain access to the first footnote area
% a separate \thanks must be used for each paragraph as LaTeX2e's \thanks
% was not built to handle multiple paragraphs
%

\author{%Shahin~Tasoujian,~\IEEEmembership{Student Member, IEEE,}
        %Saeed~Salavati,~\IEEEmembership{Student Member, IEEE,}
        %Matthew~Franchek,~\IEEEmembership{Senior Member, IEEE,}
        %and~Karolos~Grigoriadis,~\IEEEmembership{Senior Member, IEEE}% <-this % stops a space
        Shahin~Tasoujian,
        Saeed~Salavati,
        Karolos~Grigoriadis,
        and~Matthew~Franchek% <-this % stops a space
\thanks{*This paper is a preprint of a paper submitted to the IEEE Conference on Decision and Control 2020}

\thanks{S. Tasoujian, S. Salavati, K. Grigoriadis, and M. Franchek are with the Dept. of Mechanical Engineering at the University of Houston, Houston, Texas, USA, 77004 (e-mail: \{\tt\small  stasoujian@uh.edu).}}% <-this % stops a space
%  \thanks{Shahin Tasoujian is a Ph.D. candidate in Mechanical Engineering at the University of Houston, Houston, TX 77004, USA  (e-mail: \tt\small  stasoujian@uh.edu). }% <-this % stops a space
% \thanks{Manuscript received April 19, 2005; revised August 26, 2015.}}

% The paper headers
\markboth{submitted to the IEEE Conference on Decision and Control 2020}%
{Shell \MakeLowercase{\textit{et al.}}: Bare Demo of IEEEtran.cls for IEEE Journals}
% The only time the second header will appear is for the odd numbered pages
% after the title page when using the twoside option.
% 
% *** Note that you probably will NOT want to include the author's ***
% *** name in the headers of peer review papers.                   ***
% You can use \ifCLASSOPTIONpeerreview for conditional compilation here if
% you desire.

% If you want to put a publisher's ID mark on the page you can do it like
% this:
%\IEEEpubid{0000--0000/00\$00.00~\copyright~2015 IEEE}
% Remember, if you use this you must call \IEEEpubidadjcol in the second
% column for its text to clear the IEEEpubid mark.

% use for special paper notices
%\IEEEspecialpapernotice{(Invited Paper)}

% make the title area
\maketitle

% As a general rule, do not put math, special symbols or citations
% in the abstract or keywords.
\begin{abstract}
Linear parameter-varying (LPV) systems with uncertainty in time-varying delays are subject to performance degradation and instability. In this line, we investigate the stability of such systems invoking an input-output stability approach. By considering explicit bounds on the delay rate and time-varying delay uncertainty, the scaled small-gain theorem is adopted to form an interconnected time-delay LPV system with input and output vectors of the auxiliary system introduced for the uncertain dynamics. For such an interconnected time-delay LPV system subject to external disturbances, a Lyapunov-Krasovskii functional (LKF) is constructed whose derivative is augmented with the terms resulted from the descriptor method. Then, stability conditions and a prescribed induced $\mathcal{L}_2$-norm in terms of the disturbance rejection performance are derived in a convex linear matrix inequalities (LMIs) setting. Subsequently, a congruent transformation enables us to compute a gain-scheduled state-feedback controller for a class of LPV systems with an uncertain time-varying delay. As a benchmark, we examine the automated mean arterial blood pressure (MAP) control in an individual with hypotension where the MAP response dynamics to drug infusion is characterized in a time-delay LPV representation. Finally, the closed-loop simulation results are provided to demonstrate the provided methodology's performance.
\end{abstract}
% Note that keywords are not normally used for peerreview papers.
\begin{IEEEkeywords}
Linear parameter-varying (LPV) time-delay systems, Scaled small-gain theorem, Lyapunov-Krasovskii functional (LKF), Induced $\mathcal{L}_2$ norm performance, Time-Delay uncertainty,  Mean arterial blood pressure (MAP) regulation,
\end{IEEEkeywords}

% For peer review papers, you can put extra information on the cover
% page as needed:
% \ifCLASSOPTIONpeerreview
% \begin{center} \bfseries EDICS Category: 3-BBND \end{center}
% \fi
%
% For peerreview papers, this IEEEtran command inserts a page break and
% creates the second title. It will be ignored for other modes.
\IEEEpeerreviewmaketitle

\section{Introduction}
\label{sec:intro}

Model-based control design methodologies provide a systematic and practical framework for addressing stability and performance challenges in a wide variety of real-world control applications. The mathematical modeling of such applications can be carried out via either the certainty equivalence principle and physics laws or identification techniques, which approximates the model in terms of bias and variance error on an identified model \cite{Hou2013model}. However, due to unmodeled and hidden dynamics and problems like aging and external excursions, these model-based approaches suffer from mathematical and real system model mismatches. Consequently, robust control analysis has been introduced as an effective way of dealing with such discrepancy issues \cite{liu2016robust}. However, for varying uncertain parameters in such systems, classical robust methods may introduce an extra conservatism.
	
Practical systems are mainly affected by the presence of delay in their dynamics, which typically leads to performance degradation \cite{Zhang2018analysis}. Moreover, delay uncertainty and, in particular, time-varying delays further pose a robustness challenge in the control of such systems. It is noteworthy that in studying stability analysis of systems with uncertain delays, typically, the delay is assumed to be the sum of a nominal delay and a perturbed uncertain part where the system with the nominal delay is regarded to be asymptotically stable. In this regard, necessary stability conditions for linear time-invariant (LTI) systems with an uncertain constant delay via a frequency-domain approach and sufficient condition for such systems with an uncertain time-varying delay via a Lyapunov-Krasovskii functional (LKF), with a prescribed derivative, has been investigated in \cite{kharitonov2003stability}. The introduced LKF also does not explicitly depend on the bounds of the uncertainties. In \cite{fridman2006stability}, the author has used a complete LKF, with a particular functional form, consisting of a nominal plus additional terms where the former analyzes the system under the nominal delay, and the latter deals with delay perturbations and vanishes as the perturbations disappear. Inspired by this research, the static state-feedback control of input delay systems with an uncertain delay has been addressed in \cite{velazquez2016robust} in which the derivative of LKF is constructed based on a delay Lyapunov matrix. The work in  \cite{Fridman2014} follows the small-gain theorem augmented with a scaling matrix, which provides a less conservative input-output stability analysis. Moreover, to handle delay uncertainty, a factor depending on the delay rate bound has been introduced, which, to further reduce the conservatism, can be selected based on the bound's nominal value. Combined with the LKF strategy, tractable convex conditions in the form of linear matrix inequalities (LMIs) have been achieved for the stability analysis. \cite{alexandrova2018new} has studied the robustness bounds of linear retarded systems with discrete delays where each delay is comprised of a constant part plus a slowly varying perturbation.

Unlike the LTI systems, linear parameter-varying (LPV) systems have a varying structure and thus can adequately describe nonlinear and highly varying systems in a linear setting \cite{salavati2019reciprocal, tasoujian2020robust}. Similarly, control of time-delay LPV systems with the wealth of linear control techniques has been studied in the literature \cite{briat2014linear, tasoujian2019delay, Salavati20162}. In this manuscript, stability and control synthesis of LPV systems with uncertain arbitrarily varying delay and external disturbances is studied that, to the best of our knowledge, has not been addressed in the time-delay LPV control context.
To this end, a candidate LKF is employed to tackle the stability problem with the use of the less conservative descriptor method \cite{Fridman2014} analysis. The input-output approach proposed by \cite{Fridman2014} is used to address the stability of the interconnected input-output LPV system representation under a varying uncertain delay. Further, the worst-case disturbance amplification of the LPV system with uncertain delay in terms of a prescribed induced $\mathcal{L}_2$-norm performance of the system is examined and presented in a convex LMI framework. Subsequently, by taking advantage of a proper congruent transformation, a static gain-scheduled state-feedback control is designed for such systems to first asymptotically stabilize the closed-loop system and second minimize the sub-optimal closed-loop performance index. 

As a practical application, the present paper addresses the problem of mean arterial blood pressure (MAP) regulation and resuscitation in critical cases involving clinical hypotension. Such a delicate and time-sensitive task requires advanced and precise automated drug delivery strategies, which considerably increases the resuscitation chance while avoids under/over-regulated incidents \cite{cao2020blood}. An LPV representation with a varying delay has been used to describe the MAP response dynamics to drug injection. For simulation purposes, we use nonlinear functions to generate the MAP dynamics model parameters in compliance with clinical observations \cite{tasoujian2019robust}. Simulation results confirm that the utilized approach is capable of tracking a reference MAP signal while rejecting the external disturbances caused by medical interventions and complications altering the MAP characteristics. Moreover, the proposed time-delay LPV methodology tackles the challenge of robustness against the uncertain varying time-delay effectively. 
%An online estimation approach is adopted to acquire the real-time values of the MAP dynamics parameters. Since the delay cannot be treated as a random walk process, the estimation framework may impose additional uncertainty in the estimated value and our approach is used to tackle this discrepancy.

%notation
The notation used in this paper is as follows. $\mathbb{R}$ stands for the set of real numbers, $\mathbb{R}_{+}$ is the set of non-negative real numbers, and $\mathbb{R}^n$ and $\mathbb{R}^{k \times m}$ are given to denote the set of real vectors of dimension $n$ and the set of real $k \times m$ matrices, respectively. $\mathbb{S}^{n}$ and $\mathbb{S}^{n}_{++}$ represent the set of real symmetric and real symmetric positive definite $n \times n$ matrices, respectively. The positive definiteness of the matrix $\mathbf{M}$ is designated $\mathbf{M} \succ \mathbf{0}$. The inverse and transpose of a real matrix $\mathbf{M}$ are presented by $\mathbf{M}^{\text{T}}$ and $\mathbf{M}^{-1}$, respectively.
% $He [\mathbf{M}]$ is Hermitian operator defined as $He [\mathbf{M}] \triangleq \mathbf{M} + \mathbf{M}^{\text{T}}$. % and it's null-space by $ker(\mathbf{M})$.
In a symmetric matrix, the asterisk $\star$ in the $(i,\: j)$ element denotes the transpose of the $(j,\: i)$ element. $\mathcal{C} (J,\: K)$ stands for the set of continuous functions mapping a set $J$ to a set $K$.

This work is structured in the following manner. Section \ref{sec:problemstatement} provides the mathematical description of an LPV system with uncertain delay followed by the input-output stability analysis and control design meeting the performance objectives. In Section \ref{sec:Num Ex Res}, the MAP modeling and characterization is presented and the closed-loop simulation results assessing the implemented proposed LPV control methodology for the automated MAP regulation objective are discussed. Finally, the concluding remarks are given in Section \ref{sec:conclusion}.

\section{Problem Statement}\label{sec:problemstatement}

\subsection{Stability and $\mathcal{L}_2$-Gain Analysis of LPV Systems with Uncertain Delay} \label{sec:Stability and L_2 analysis}
We consider the following state-space representation of a general LPV system with an uncertain time-varying state delay:

\begin{equation}
\begin{array}{rcl}
 \dot{\mathbf{x}}(t)  & = &  \mathbf{A}(\boldsymbol{\rho}(t)) \mathbf{x}(t)+\mathbf{A}_\tau(\boldsymbol{\rho}(t)) \mathbf{x}\big(t-\tau (\boldsymbol{\rho}(t))\big) \\[0.1cm] 
 & + & \mathbf{B}_1(\boldsymbol{\rho}(t))\mathbf{w}(t) + \mathbf{B}_2(\boldsymbol{\rho}(t))\mathbf{u}(t), \\[0.2cm] 
  \mathbf{z}(t) & = &\mathbf{C}_1(\boldsymbol{\rho}(t)) \mathbf{x}(t) + \mathbf{C}_{1, \tau}(\boldsymbol{\rho}(t)) \mathbf{x}\big(t-\tau (\boldsymbol{\rho}(t))\big)\\[0.1cm] 
  & + & \mathbf{D}_{11}(\boldsymbol{\rho}(t)) \mathbf{w}(t)+ \mathbf{D}_{12}(\boldsymbol{\rho}(t)) \mathbf{u}(t),\\[0.2cm]
%  \mathbf{y}(t) & = & \mathbf{C}_2(\boldsymbol{\rho}(t)) \mathbf{x}(t)+ \mathbf{C}_{2, \tau}(\boldsymbol{\rho}(t)) \mathbf{x}\big(t-\tau (\boldsymbol{\rho}(t))\big)\\[0.1cm]
%  & + & \mathbf{D}_{21}(\boldsymbol{\rho}(t)) \mathbf{w}(t),\\[0.2cm]
 \mathbf{x}(t_0 + \theta)  & = & \boldsymbol{\phi}(\theta), \:\:\:\: \forall \theta \in [-\overline{\tau}, \: \: 0],
\end{array}
\label{LPVsystem}
\end{equation}
where $\mathbf{x}(t) \in \mathbb{R}^n$ denotes the state vector of the system, $\mathbf{w}(t) \in \mathbb{R}^{n_w}$ is the exogenous input vector with bounded $\mathcal{L}_2$-norm, $\mathbf{u}(t) \in \mathbb{R}^{n_u}$ is the control input vector, $\mathbf{z}(t) \in \mathbb{R}^{n_z}$ is the vector of controlled outputs,
% $\mathbf{y}(t) \in \mathbb{R}^{n_y}$ is the vector of measured outputs,
and the matrix coefficients $\mathbf{A}(\cdot)$, $\mathbf{A}_\tau(\cdot)$, $\mathbf{B}_1(\cdot)$, $\mathbf{B}_2(\cdot)$, $\mathbf{C}_1(\cdot)$, $\mathbf{C}_{1, \tau}(\cdot)$, $\mathbf{D}_{11}(\cdot)$, and $\mathbf{D}_{12}(\cdot)$  
% $\mathbf{C}_2(\cdot)$, $\mathbf{C}_{2, \tau}(\cdot)$, and  $\mathbf{D}_{21}(\cdot)$ 
are real-valued matrices which are  continuous functions of the time-varying parameter vector $\boldsymbol{\rho}(\cdot) \in \mathscr{F}^\nu _\mathscr{P}$. The scheduling parameter vector is assumed to be measurable in real-time whose trajectories and rate belong to the set $\mathscr{F}^\nu _\mathscr{P}$ defined as
\begin{align}
 \mathscr{F}^\nu _\mathscr{P} \triangleq \{\boldsymbol{\rho}(t) \in \mathcal{C}(\mathbb{R}_{+},\mathbb{R}^{n_s}):\boldsymbol{\rho}(t) \in \mathscr{P}, |\dot{\rho}_i (t)| \leq \nu_i,\nonumber\\  i=1,2,\dots,n_s\},
 \label{eq:parametertraj}
\end{align}
where $n_s$ is the number of parameters and  $\mathscr{P}$ is a compact subset of $\mathbb{R}^{n_s}$.  Moreover, in (\ref{LPVsystem}), $\boldsymbol{\phi}(\theta) \in \mathcal{C}([-\overline{\tau} \:\: 0], \mathbb{R}^n)$ is the functional system's initial condition, and  $\tau (\boldsymbol{\rho}(t))$ is a differentiable scalar function representing the parameter-varying uncertain time delay as follows:

\begin{equation}
    \tau (\boldsymbol{\rho}(t)) = \tau_n + \eta(t), \:\: \vert \eta(t) \vert \leq \mu \leq \tau_n,
    \label{eq:uncertaintimedelay}
\end{equation}

\noindent where $\tau_n$ denotes the nominal delay value and the time-varying uncertain part of the delay is bounded by $\mu$. Moreover, the time-varying delay is considered to be dependent on the scheduling parameter vector and lies in the set $\mathscr{T}^{\nu_\tau}$ defined as  
\begin{align}
\mathscr{T}^{\nu_\tau} \triangleq\! \{ \tau (\boldsymbol{\rho}(t))\! \in \mathcal{C}(\mathscr{P},\mathbb{R}_{+}) \!:\! 0 \leq \tau (\cdot) \leq \overline{\tau} < \infty, \dot{\tau}(\cdot) \leq \nu_\tau\}.
\label{eq:delayset}
\end{align}

Considering the time-delay LPV system (\ref{LPVsystem}), with an allowable parameter vector trajectory in $\mathscr{F}^\nu _\mathscr{P}$, and a time-delay in $\mathscr{T}^{\nu_\tau}$, the design objectives are as follows:
	\begin{itemize}
		\item Internal asymptotic stability of the LPV system with an uncertain varying time-delay in the presence of parameter variations, delay uncertainties, and disturbances, and
		\item Minimization of the worst case amplification of the desired output, $\mathbf{z}$, to a nonzero disturbance signal, $\mathbf{w}$, with bounded energy, \textit{i.e.} solving the problem of $\gamma$-suboptimal induced $\mathcal{L}_2$-norm (energy-to-energy gain) of the mapping $\mathbf{T}_{\mathbf{z}\mathbf{w}}: \mathbf{w} \rightarrow \mathbf{z}$  given by
			\begin{equation}
				{\min}\Vert \mathbf{T}_{\mathbf{z}\mathbf{w}}\Vert_{i,2} = {\min} \underset{\boldsymbol{\rho} \in \mathscr{F}^\nu _\mathscr{P}}{\sup} \:\:\: \underset{\Vert \mathbf{w} \Vert_2 \neq 0 ,\mathbf{w} \in \mathcal{L}_2}{\sup}\:\: \frac{\Vert \mathbf{z} \Vert_2}{\Vert \mathbf{w} \Vert_2}<\gamma,
				\label{eq:Performance Index}
			\end{equation}
	\end{itemize}
\noindent where $\gamma$ is a positive scalar.

In order to examine the stability and $\mathcal{L}_2$-gain analysis of the LPV system with an uncertain time-varying delay, we utilize the small-gain theorem. For this purpose, by considering (\ref{eq:uncertaintimedelay}),  we rewrite the delayed state of the system as follows

\begin{equation}
 \mathbf{x}(t-\tau(t)) =\mathbf{x}(t-\tau_n) - \int^{-\tau_n}_{-\tau_n-\eta(t)} \dot{\mathbf{x}}(t+s)ds,   
\end{equation}

\noindent where the time-varying uncertain part of the delayed state is treated as a disturbance and defined as a new feedback signal:

\begin{equation}
    \mathbf{u}_1(t) = (\Delta \mathbf{y}_1)(t) = -\frac{1}{\mu \sqrt{\mathscr{F}(\nu_\tau)}}\int^{-\tau_n}_{-\tau_n-\eta(t)} \!\!\!\!\!\!\mathbf{y}_1(t+s) ds,
    \label{eq:delayunc_opr}
\end{equation}
\noindent where $\mathscr{F}(\nu_\tau)$ is a continuous function of the time-delay rate, $\nu_\tau$, which will be defined later using an extension of the small-gain theorem. By defining a new auxiliary system, $\Delta$, with additional input and output vectors, namely, $\mathbf{y}_1$ and $\mathbf{u}_1$, the overall interconnected feedback system is shown in Fig. \ref{fig:interconnectedsys}. Accordingly, the unforced time-delay LPV system (\ref{LPVsystem}), \textit{i.e.} no control input or $u\equiv0$, is represented as a feedback interconnected system as follows:
\begin{figure}[!t] %[!htbp]
\hspace*{-.12in}
\centering \includegraphics[width=0.8\columnwidth, height=1.4in]{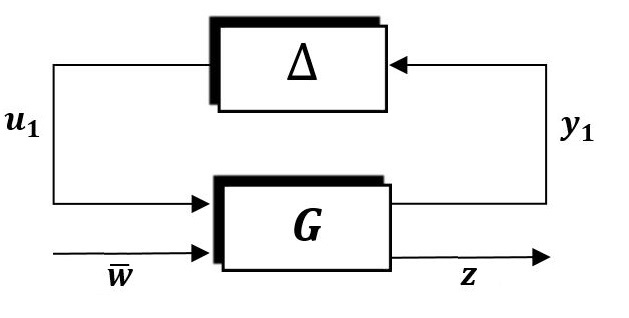}
\caption{The overall interconnected system} 
\label{fig:interconnectedsys}
\end{figure}

\begin{equation}
\begin{array}{rcl}
 \dot{\mathbf{x}}(t)  & = &  \mathbf{A}(\boldsymbol{\rho}(t)) \mathbf{x}(t)+\mathbf{A}_\tau(\boldsymbol{\rho}(t)) \mathbf{x}(t-\tau_n) \\[0.1cm]
 & + & \mu \mathbf{A}_\tau(\boldsymbol{\rho}(t))\mathbb{X}^{-1}\mathbf{u}_1(t) + \gamma^{-1}\mathbf{B}_1(\boldsymbol{\rho}(t))\bar{\mathbf{w}}(t), \\[0.2cm] 
  \mathbf{y}_1(t) & = & \sqrt{\mathscr{F}(\nu_\tau)}\mathbb{X}\dot{\mathbf{x}}(t),\\[0.2cm]
 \mathbf{z}(t) & = &\mathbf{C}_1(\boldsymbol{\rho}(t)) \mathbf{x}(t) + \mathbf{C}_{1, \tau}(\boldsymbol{\rho}(t)) \mathbf{x}(t-\tau_n)\\[0.1cm] 
  & + & \mu \mathbf{C}_{1,\tau}(\boldsymbol{\rho}(t))\mathbb{X}^{-1}\mathbf{u}_1(t)+\gamma^{-1}\mathbf{D}_{11}(\boldsymbol{\rho}(t)) \bar{\mathbf{w}}(t),
\end{array}
\label{LPVsystem_inputoutput}
\end{equation}

\noindent where $\mathbb{X}$ denotes a scaling non-singular matrix and $\bar{\mathbf{w}}(t)=\gamma \mathbf{w}(t)$. 

The following lemma is used to derive the delay-dependent conditions for stability and $\mathcal{L}_2$-gain analysis of the LPV time-delay system with an uncertain time-varying delay.  
% Next, by employing the small-gain theorem, following lemma has been provided and along with the Lyapunov-Krasovskii method will  later be used for the derivation of the delay-dependent conditions for the stability and $\mathcal{L}_2$-gain analysis of LPV time-delay systems with an uncertain time-varying delay.  

\begin{lemma}\label{lem:lemma1} \textbf{(Small-Gain Theorem for Systems with Uncertain Time-Delay)} \cite{Fridman2014}: Considering  $\mathbf{y}_1 = \mathbf{T}_{\mathbf{y}_1\mathbf{u}_1}\mathbf{u}_1$, and $\mathbf{u}_1 = \Delta \mathbf{y}_1$, where both systems $\mathbf{T}_{\mathbf{y}_1\mathbf{u}_1} : \mathcal{L}_2 [0, \infty]  \rightarrow \mathcal{L}_2 [0, \infty]$ and $\Delta : \mathcal{L}_2 [0, \infty]  \rightarrow \mathcal{L}_2 [0, \infty]$ are considered to be input-output stable. The interconnected overall system ($\mathbf{T}_{\mathbf{y}_1\mathbf{u}_1}$, $\Delta$) is input-output stable if $\gamma_0(\Delta)\gamma_0(\mathbf{T}_{\mathbf{y}_1\mathbf{u}_1}) <1$, where $\gamma_0$ is the induced $\mathcal{L}_2$ gain. Moreover, the $\mathcal{L}_2$-gain of the system $\Delta$ is found to be $\gamma_0(\Delta) \leq \overline{\tau} \sqrt{\mathscr{F}(\nu_\tau)}$ where \cite{shustin2007delay}
\begin{equation}
\mathscr{F}(\nu_\tau) = \left\{\begin{array}{lcr}
1, & & -\infty \leq \nu_\tau \leq 1, \\[0.1cm] 
\dfrac{2\nu_\tau-1}{\nu_\tau}, & & 1 < \nu_\tau <2,\\[0.2cm] 
\dfrac{7\nu_\tau-8}{4\nu_\tau-4}, && \nu_\tau \geq2,\\[0.3cm] 
\dfrac{7}{4}, & & \nu_\tau \: is \: unknown, 
\end{array}\right.  
\label{eq:F}
\end{equation}
\noindent with $\Vert \mathbf{T}_{\mathbf{y}_1\mathbf{u}_1}\Vert_{i,2} < \dfrac{1}{\overline{\tau} \sqrt{\mathscr{F}(\nu_\tau)}}$.
\end{lemma}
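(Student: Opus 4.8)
The claim splits into two essentially independent parts, which I would treat in turn and then combine: (i) the classical small-gain criterion for the feedback interconnection $(\mathbf{T}_{\mathbf{y}_1\mathbf{u}_1},\Delta)$; and (ii) the explicit bound $\gamma_0(\Delta)\le\overline{\tau}\sqrt{\mathscr{F}(\nu_\tau)}$ on the induced $\mathcal{L}_2$ gain of the delay-uncertainty operator $\Delta$ of (\ref{eq:delayunc_opr}), together with the rate-dependent function $\mathscr{F}(\nu_\tau)$ of (\ref{eq:F}). The stated condition on $\|\mathbf{T}_{\mathbf{y}_1\mathbf{u}_1}\|_{i,2}$ is then immediate from (i) and (ii).

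For (i), the plan is to argue in the extended $\mathcal{L}_2$ setting with causal truncation operators $P_T$, so that the feedback law is well posed. Injecting exogenous signals into the loop, $\mathbf{y}_1=\mathbf{T}_{\mathbf{y}_1\mathbf{u}_1}\mathbf{u}_1+\mathbf{r}_1$ and $\mathbf{u}_1=\Delta\mathbf{y}_1+\mathbf{r}_2$, truncating at an arbitrary horizon $T$, and using causality of both operators together with the fact that truncation does not increase induced norms, one obtains
\begin{equation*}
\|P_T\mathbf{y}_1\|_{2}\;\le\;\gamma_0(\mathbf{T}_{\mathbf{y}_1\mathbf{u}_1})\,\gamma_0(\Delta)\,\|P_T\mathbf{y}_1\|_{2}\;+\;c_1\|\mathbf{r}_1\|_{2}+c_2\|\mathbf{r}_2\|_{2},
\end{equation*}
with $c_1,c_2$ depending only on the two gains. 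Since $\gamma_0(\mathbf{T}_{\mathbf{y}_1\mathbf{u}_1})\gamma_0(\Delta)<1$, the first term is absorbed on the left, giving a bound on $\|P_T\mathbf{y}_1\|_{2}$ uniform in $T$; letting $T\to\infty$ by monotone convergence yields $\mathbf{y}_1\in\mathcal{L}_2$, hence $\mathbf{u}_1\in\mathcal{L}_2$, i.e. input-output stability of the interconnection. This is the textbook small-gain argument.

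For (ii), the key observation is that $\Delta$ in (\ref{eq:delayunc_opr}) is, up to the fixed scalar gain $1/(\mu\sqrt{\mathscr{F}(\nu_\tau)})$, a sign, and an $\mathcal{L}_2$-norm-preserving time shift, an operator of the generic form $(\mathcal{G}g)(t)=\int_{t-h(t)}^{t}g(s)\,ds$, whose window width obeys $0\le h(t)\le\overline{\tau}$ and whose growth rate obeys $\dot h(t)\le\nu_\tau$ (because $\tau_n$ is constant and $\dot\tau(\boldsymbol{\rho}(t))\le\nu_\tau$). I would first record the crude estimate from pointwise Cauchy--Schwarz, $|(\mathcal{G}g)(t)|^{2}\le h(t)\int_{t-h(t)}^{t}|g(s)|^{2}\,ds$, then integrate in $t$ and interchange the order of integration (Fubini) to arrive at
\begin{equation*}
\|\mathcal{G}g\|_{2}^{2}\;\le\;\int_{0}^{\infty}|g(s)|^{2}\Big(\int_{E_s}h(t)\,dt\Big)ds,\qquad E_s:=\{\,t\ge s:\;t-h(t)\le s\,\},
\end{equation*}
so the gain is governed by $\sup_{s}\int_{E_s}h(t)\,dt$. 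For $\nu_\tau\le1$ the map $t\mapsto t-h(t)$ is nondecreasing, $E_s$ is a single interval, and one recovers $\sup_s\int_{E_s}h(t)\,dt\le\overline{\tau}^{2}$, i.e. $\mathscr{F}\equiv1$; for $\nu_\tau>1$ the lower-endpoint trajectory need not be monotone, $E_s$ may break into several pieces, and the supremum is attained by a worst-case piecewise-linear (bang-bang in $\dot h$) profile of $h$. Carrying out this optimization separately in the regimes $1<\nu_\tau<2$, $\nu_\tau\ge2$, and $\nu_\tau$ unknown reproduces the remaining three branches of $\mathscr{F}(\nu_\tau)$ in (\ref{eq:F}) via $\sup_s\int_{E_s}h(t)\,dt\le\overline{\tau}^{2}\mathscr{F}(\nu_\tau)$, whence $\gamma_0(\Delta)\le\overline{\tau}\sqrt{\mathscr{F}(\nu_\tau)}$ (cf. \cite{shustin2007delay,Fridman2014}). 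Combining with (i), the condition $\gamma_0(\mathbf{T}_{\mathbf{y}_1\mathbf{u}_1})\gamma_0(\Delta)<1$ forces $\|\mathbf{T}_{\mathbf{y}_1\mathbf{u}_1}\|_{i,2}<1/(\overline{\tau}\sqrt{\mathscr{F}(\nu_\tau)})$, as claimed.

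I expect the only genuinely delicate step to be the final part of (ii): setting up and solving the variational problem $\max_h\,\sup_s\int_{E_s}h(t)\,dt$ under the constraints $0\le h\le\overline{\tau}$ and $\dot h\le\nu_\tau$, since it is this optimization that yields the sharp piecewise-constant value $\mathscr{F}(\nu_\tau)$ and the one-sided dependence on $\nu_\tau$. Everything else — the Cauchy--Schwarz/Fubini estimate and the small-gain fixed-point manipulation — is routine.
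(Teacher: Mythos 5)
First, a point of reference: the paper does not prove Lemma \ref{lem:lemma1} at all --- it is imported from \cite{Fridman2014} and \cite{shustin2007delay} --- so there is no in-paper argument to compare against; your proposal is an attempted reconstruction of the cited proofs. Part (i), the truncation/extended-space small-gain fixed-point argument, is the standard proof and is fine. The genuine gap is in part (ii). You replace $\Delta$ of (\ref{eq:delayunc_opr}) by the generic operator $(\mathcal{G}g)(t)=\int_{t-h(t)}^{t}g(s)\,ds$ with $0\le h(t)\le\overline{\tau}$, but that is not the operator at hand: after the time shift by $\tau_n$, the integration window of $\Delta$ is anchored at $t-\tau_n$ and has \emph{signed} width $\eta(t)$ with $|\eta(t)|\le\mu$, reversing orientation whenever $\eta$ changes sign. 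This is not cosmetic. With your window, any $t\in E_s=\{t\ge s:\,t-h(t)\le s\}$ satisfies $t-s\le h(t)\le\overline{\tau}$, so $E_s\subseteq[s,\,s+\overline{\tau}]$ and your own Cauchy--Schwarz/Fubini estimate gives $\int_{E_s}h(t)\,dt\le\overline{\tau}^{\,2}$ \emph{unconditionally}, with no dependence on $\dot h$. The variational problem you pose therefore can never return a value exceeding $\overline{\tau}^{\,2}$; it yields $\mathscr{F}\equiv 1$ for every $\nu_\tau$ and cannot ``reproduce the remaining three branches'' of (\ref{eq:F}). That your program would prove something strictly stronger than the known sharp result is the tell that the operator has been mis-modeled.

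For the correct operator the relevant set is $E_s=\{t:\ s\ \text{lies between}\ t-\tau(t)\ \text{and}\ t-\tau_n\}$, and the entire reason $\mathscr{F}(\nu_\tau)>1$ for fast delays is that when $t\mapsto t-\tau(t)$ is non-monotone ($\nu_\tau>1$) a fixed level $s$ can be crossed repeatedly, so $s$ enters and leaves the window many times and its total occupation time can exceed $\mu$ (though, nontrivially, never $\tfrac{7}{4}\mu$). Controlling this requires the level-crossing/occupation-measure analysis of \cite{shustin2007delay}, which must handle all levels $s$ simultaneously rather than optimizing a bang-bang profile of $h$ level by level; your sketch neither sets up the correct $E_s$ nor carries out that analysis. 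A secondary issue, inherited from the lemma statement itself: with the correct window the natural bound on the unnormalized operator is $\mu\sqrt{\mathscr{F}(\nu_\tau)}$ (consistent with the normalization $1/(\mu\sqrt{\mathscr{F}(\nu_\tau)})$ in (\ref{eq:delayunc_opr})), not $\overline{\tau}\sqrt{\mathscr{F}(\nu_\tau)}$; keeping $\overline{\tau}$ and $\mu$ distinct is what makes the loop a strict small-gain pair, so this should be fixed rather than propagated.
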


The following theorem provides the sufficient LMI condition to guarantee the stability and performance objectives:

\begin{theorem}\label{thm:thm1} The unforced LPV system (\ref{LPVsystem}) with an uncertain delay $\vert \eta(t) \vert \leq \mu \leq \tau_n$, over the given sets $\mathscr{F}^\nu _\mathscr{P}$ and $\mathscr{T}^{\nu_\tau}$ is asymptotically stable with the $\gamma$-suboptimal induced $\mathcal{L}_2$-norm, \textit{i.e.} $\Vert \mathbf{z} \Vert_2 \leq \gamma \Vert \mathbf{w} \Vert_2$, if there exist continuously differentiable parameter dependent positive-definite matrix functions $\mathbf{P}(\boldsymbol{\rho}(t))$ , $\mathbf{S}(\boldsymbol{\rho}(t)): \mathscr{F}^\nu_\mathscr{P}\rightarrow\mathbb{S}^{n}_{++}$, positive-definite matrices $\mathbf{Q}$, $\mathbf{R} \in \mathbb{S}^{2n}_{++}$, parameter dependent real matrices $\mathbf{V}_1$, $\mathbf{V}_2$, $\mathbf{V}_3: \mathscr{F}^\nu_\mathscr{P}\rightarrow\mathbb{R}^{n\times n}$, and a positive scalar $\mathbf{\gamma}$ satisfying the following LMI condition
\begin{equation}
    \begin{array}{l}
\left[\begin{array}{cc}
\!\dot{\mathbf{P}}-\!\mathbf{R}+\mathbf{Q}+\!\mathbf{V}_{1}^{\text{T}}\mathbf{A}\!+\!\mathbf{A}^{\text{T}}\mathbf{V}_{1} & \mathbf{P}-\mathbf{V}_{1}^{\text{T}}+\mathbf{A}^{\text{T}}\mathbf{V}_{2} \\[0.1cm]
\star & \!\!\tau_n^2 \mathbf{R}\! +\! \mathscr{F}(\nu_\tau)\mathbf{S}\!-\!\mathbf{V}_2^{\text{T}}\!-\!\mathbf{V}_2 \\[0.1cm]
\star & \star \\[0.1cm]
\star & \star \\[0.1cm]
\star & \star 
\end{array}\right.\\[45pt]
   \left.\begin{array}{cccc}
\mathbf{R}+\mathbf{V}_{1}^{\text{T}}\mathbf{A}_{\tau}+\mathbf{A}^{\text{T}}\mathbf{V}_{3} & \mu\mathbf{V}_{1}^{\text{T}}\mathbf{A}_{\tau}& \mathbf{V}_{1}^{\text{T}}\mathbf{B}_{1} & \mathbf{C}_{1}^{\text{T}} \\[0.1cm]
 \mathbf{V}_{2}^{\text{T}}\mathbf{A}_{\tau}-\mathbf{V_{3}} & \mu\mathbf{V}_{2}^{\text{T}}\mathbf{A}_{\tau} & \mathbf{V}_{2}^{\text{T}}\mathbf{B}_{1} &  \mathbf{0}\\[0.1cm]
-\mathbf{R}-\mathbf{Q}+\mathbf{V}_{3}^{\text{T}}\mathbf{A}_{\tau}+\mathbf{A}_{\tau}^{\text{T}}\mathbf{V}_{3} & \mu\mathbf{V}_{3}^{\text{T}}\mathbf{A}_{\tau}& \mathbf{V}_{3}^{\text{T}}\mathbf{B}_{1} & \mathbf{C}_{1, \tau}^{\text{T}}\\[0.1cm]
\star & -\mathbf{S} & \mathbf{0} & \mu \mathbf{C}_{1, \tau}^{\text{T}}\\[0.1cm]
\star & \star & -\gamma^{2} \mathbf{I} & \mathbf{D}_{11}^{\text{T}}\\[0.1cm]
\star & \star & \star & -\mathbf{I}
\end{array}\!\!\!\right] \!\!\prec\! \mathbf{0}
\end{array}
\label{LMI_unforced}
\end{equation}

\noindent where $\dot{\mathbf{P}}=\sum_{i=1}^{n_s} \dot{\rho}_i(t) \frac{\partial \mathbf{P}(\boldsymbol{\rho}(t))}{\partial \rho_i(t)}$ and the parameter dependence of the matrices is dropped for brevity.

\end{theorem}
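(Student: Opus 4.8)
The plan is to combine the scaled small-gain reduction of Lemma \ref{lem:lemma1} with a descriptor-type Lyapunov--Krasovskii argument applied to the \emph{nominal-delay} block $\mathbf{T}_{\mathbf{y}_1\mathbf{u}_1}$ of the interconnection in Fig. \ref{fig:interconnectedsys}. Because of the normalization built into (\ref{eq:delayunc_opr}), Lemma \ref{lem:lemma1} gives $\gamma_0(\Delta)\le 1$; hence it suffices to produce a Lyapunov--Krasovskii functional $V$ and show, along the trajectories of (\ref{LPVsystem_inputoutput}) written with the scaled feedback signal $\widetilde{\mathbf{u}}_1 \triangleq \mathbb{X}^{-1}\mathbf{u}_1$ (so that $\mathbb{X}$ drops out once we set $\mathbf{S}=\mathbb{X}^{\text{T}}\mathbb{X}$), the dissipation inequality
\[
\dot V(t) + \mathbf{y}_1^{\text{T}}(t)\mathbf{y}_1(t) - \mathbf{u}_1^{\text{T}}(t)\mathbf{u}_1(t) + \mathbf{z}^{\text{T}}(t)\mathbf{z}(t) - \gamma^2 \mathbf{w}^{\text{T}}(t)\mathbf{w}(t) < 0 .
\]
Given this, internal asymptotic stability for $\mathbf{w}\equiv\mathbf{0}$ follows from the strictness of the inequality together with causality of $\Delta$ (which yields $\Vert\mathbf{u}_1\Vert_{2,[0,T]}\le\Vert\mathbf{y}_1\Vert_{2,[0,T]}$ on truncations) and Barbalat's lemma, while integrating from zero initial conditions and discarding the nonnegative term $\Vert\mathbf{y}_1\Vert_2^2-\Vert\mathbf{u}_1\Vert_2^2$ gives $\Vert\mathbf{z}\Vert_2\le\gamma\Vert\mathbf{w}\Vert_2$. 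Note that the residual uncertainty $\eta(t)$ has been absorbed into $\Delta$, so $\mathbf{T}_{\mathbf{y}_1\mathbf{u}_1}$ carries the \emph{constant} delay $\tau_n$ and no $\dot\tau$ term will appear.

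For $V$ I would take the descriptor-method functional of the type used in \cite{Fridman2014},
\[
V = \mathbf{x}^{\text{T}}(t)\mathbf{P}(\boldsymbol{\rho})\mathbf{x}(t) + \int_{t-\tau_n}^{t}\mathbf{x}^{\text{T}}(s)\mathbf{Q}\mathbf{x}(s)\,ds + \tau_n\int_{-\tau_n}^{0}\int_{t+\theta}^{t}\dot{\mathbf{x}}^{\text{T}}(s)\mathbf{R}\dot{\mathbf{x}}(s)\,ds\,d\theta ,
\]
whose positivity is immediate from $\mathbf{P}(\boldsymbol{\rho})\succ\mathbf{0}$, $\mathbf{Q}\succ\mathbf{0}$, $\mathbf{R}\succ\mathbf{0}$. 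Differentiating along (\ref{LPVsystem_inputoutput}): the quadratic term produces $2\mathbf{x}^{\text{T}}\mathbf{P}\dot{\mathbf{x}} + \mathbf{x}^{\text{T}}\dot{\mathbf{P}}\mathbf{x}$ with $\dot{\mathbf{P}}=\sum_i \dot\rho_i\,\partial\mathbf{P}/\partial\rho_i$ well defined and bounded thanks to $|\dot\rho_i|\le\nu_i$; the single integral yields the boundary difference $\mathbf{x}^{\text{T}}(t)\mathbf{Q}\mathbf{x}(t) - \mathbf{x}^{\text{T}}(t-\tau_n)\mathbf{Q}\mathbf{x}(t-\tau_n)$; and the double integral yields $\tau_n^2\dot{\mathbf{x}}^{\text{T}}\mathbf{R}\dot{\mathbf{x}} - \tau_n\int_{t-\tau_n}^{t}\dot{\mathbf{x}}^{\text{T}}(s)\mathbf{R}\dot{\mathbf{x}}(s)\,ds$, whose remaining integral I would upper-bound by Jensen's inequality as $-[\mathbf{x}(t)-\mathbf{x}(t-\tau_n)]^{\text{T}}\mathbf{R}[\mathbf{x}(t)-\mathbf{x}(t-\tau_n)]$. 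This Jensen step is exactly what renders the condition delay-dependent and generates the $\tau_n^2\mathbf{R}$ and the $\pm\mathbf{R}$ cross-terms visible in (\ref{LMI_unforced}).

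Finally I would inject the descriptor relaxation by adding to $\dot V$ the identically-zero expression $2\big[\mathbf{x}^{\text{T}}\mathbf{V}_1^{\text{T}} + \dot{\mathbf{x}}^{\text{T}}\mathbf{V}_2^{\text{T}} + \mathbf{x}^{\text{T}}(t-\tau_n)\mathbf{V}_3^{\text{T}}\big]\big[-\dot{\mathbf{x}} + \mathbf{A}\mathbf{x} + \mathbf{A}_\tau\mathbf{x}(t-\tau_n) + \mu\mathbf{A}_\tau\widetilde{\mathbf{u}}_1 + \mathbf{B}_1\mathbf{w}\big]$, and then adding the supply rate, written through $\mathbf{y}_1 = \sqrt{\mathscr{F}(\nu_\tau)}\,\mathbb{X}\dot{\mathbf{x}}$ and the identification $\mathbf{S}=\mathbb{X}^{\text{T}}\mathbb{X}$, so that $\mathbf{y}_1^{\text{T}}\mathbf{y}_1 = \mathscr{F}(\nu_\tau)\dot{\mathbf{x}}^{\text{T}}\mathbf{S}\dot{\mathbf{x}}$ and $\mathbf{u}_1^{\text{T}}\mathbf{u}_1 = \widetilde{\mathbf{u}}_1^{\text{T}}\mathbf{S}\widetilde{\mathbf{u}}_1$. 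Gathering everything in $\boldsymbol{\xi} \triangleq \operatorname{col}\big(\mathbf{x}(t),\dot{\mathbf{x}}(t),\mathbf{x}(t-\tau_n),\widetilde{\mathbf{u}}_1(t),\mathbf{w}(t)\big)$ turns the left-hand side of the dissipation inequality into a quadratic form $\boldsymbol{\xi}^{\text{T}}\big(\boldsymbol{\Psi}_0 + \mathbf{C}^{\text{T}}\mathbf{C}\big)\boldsymbol{\xi}$, where $\mathbf{C}=[\,\mathbf{C}_1\ \ \mathbf{0}\ \ \mathbf{C}_{1,\tau}\ \ \mu\mathbf{C}_{1,\tau}\ \ \mathbf{D}_{11}\,]$ realizes $\mathbf{z}$ and $\boldsymbol{\Psi}_0$ is exactly the leading $5\times5$ block of (\ref{LMI_unforced}); a Schur complement on the $\mathbf{C}^{\text{T}}\mathbf{C}$ term appends the last block row/column with the $-\mathbf{I}$ entry, so that $\boldsymbol{\Psi}_0 + \mathbf{C}^{\text{T}}\mathbf{C}\prec\mathbf{0}$ is equivalent to (\ref{LMI_unforced}). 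Feasibility of (\ref{LMI_unforced}) thus delivers $\dot V + (\text{supply})\le -\epsilon\Vert\boldsymbol{\xi}\Vert^2$ for some $\epsilon>0$ and, via its $(2,2)$ block, $\mathbf{V}_2+\mathbf{V}_2^{\text{T}}\succ\mathbf{0}$, so $\mathbf{V}_2$ and $\mathbb{X}$ are nonsingular and all the change-of-variables steps are legitimate; combined with the reduction above this proves the claim. I expect the main obstacle to be the bookkeeping of this assembly --- in particular carrying the $\mathbb{X}$ change of variables consistently through the small-gain supply rate so that it lands precisely on the $\mathscr{F}(\nu_\tau)\mathbf{S}$ and $-\mathbf{S}$ entries of (\ref{LMI_unforced}) --- together with the fact that (\ref{LMI_unforced}) is really an infinite family of LMIs parameterized by $(\boldsymbol{\rho},\dot{\boldsymbol{\rho}})\in\mathscr{F}^\nu_\mathscr{P}$, whose uniform feasibility is in practice enforced by gridding or by a convex multi-affine vertex relaxation over $\boldsymbol{\rho}$ and $\dot{\boldsymbol{\rho}}$.
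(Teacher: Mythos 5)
Your proposal is correct and follows essentially the same route as the paper's proof: the same Lyapunov--Krasovskii functional (your $\tau_n\mathbf{R}$ in the double integral is exactly the paper's $\mathbf{R}_0$ after the substitution $\mathbf{R}=\mathbf{R}_0/\tau_n$), the same Jensen bound, the same descriptor slack-variable identity, the same scaled small-gain supply rate with $\mathbf{S}=\mathbb{X}^{\text{T}}\mathbb{X}$, and the same Schur complement on the output term. You in fact spell out a few steps the paper leaves implicit, namely how the dissipation inequality plus causality of $\Delta$ yields the $\mathcal{L}_2$ bound and internal stability, and why the $(2,2)$ block guarantees nonsingularity of the slack variable.
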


\begin{proof}
The proof begins by using the following LKF candidate
\begin{align}
      V (\mathbf{x}_t, \dot{\mathbf{x}_t}, \boldsymbol{\rho}, t)  & = \mathbf{x}^{\text{T}}(t) \mathbf{P}(\boldsymbol{\rho}(t))\mathbf{x}(t) + \int_{t-\tau_n}^{t} \mathbf{x}^{\text{T}}(s) \mathbf{Q} \mathbf{x}(s) ds, \nonumber\\
      & +\int_{-\tau_n}^{0} \int_{t+\theta}^{t} \dot{\mathbf{x}}^{\text{T}}(s) \mathbf{R}_0 \dot{\mathbf{x}}(s) ds d\theta
          \label{eq:lkf1}
\end{align}
\noindent The notation $\mathbf{x}_t(\theta)$ refers to $\mathbf{x}(t+\theta)$ for $\theta \in [\begin{array}{cc}
-\overline{\tau} & 0\end{array}]$ where $\mathbf{x}_{t} \in \mathcal{C} ([-\overline{\tau} \:\: 0], \mathbb{R}^n)$ is the infinite-dimensional delay state vector of the system.
Considering the Lyapunov stability theory, in order to assure the asymptotic stability of the investigated LPV system, we need to evaluate the time derivative of the LKF (\ref{eq:lkf1}) along the trajectories of the LPV system (\ref{LPVsystem}), that is

    \begin{align}
      \dot{V} (\mathbf{x}_t, \dot{\mathbf{x}}_t, \boldsymbol{\rho}, t)  & =  
 2 \dot{\mathbf{x}}^{\text{T}}(t) \mathbf{P}(\boldsymbol{\rho}(t))\mathbf{x}(t) + \mathbf{x}^{\text{T}}(t)\dot{\mathbf{P}} \mathbf{x}(t) \nonumber\\[0.15cm]
   & +\mathbf{x}^{\text{T}}(t) \mathbf{Q}\mathbf{x}(t) +\mathbf{x}^{\text{T}}(t-\tau_n) \mathbf{Q}\mathbf{x}(t-\tau_n)\nonumber\\[0.15cm]
   & +\tau_n \dot{\mathbf{x}}^{\text{T}}(t) \mathbf{R}_0\dot{\mathbf{x}} -  \int_{t-\tau_n}^{t} \!\!\!\!\!\dot{\mathbf{x}}^{\text{T}}(\theta) \mathbf{R}_0 \dot{\mathbf{x}}(\theta) d\theta.
\label{eq:derlyap}
\end{align}
\noindent Employing the Jensen's inequality, the integral term in (\ref{eq:derlyap}) can be upper bounded through
\begin{align}
     - \!\! \int_{t-\tau_n}^{t} \!\!\!\!\!\dot{\mathbf{x}}^{\text{T}}(\theta) \mathbf{R}_0 \dot{\mathbf{x}}(\theta) d\theta \leq\! -\frac{1}{\tau_n}\bigg(\int_{t-\tau_n}^{t} \!\!\!\!\!\!\!\dot{\mathbf{x}}(\theta) d\theta \!\bigg)^{\text{T}}\!\!\mathbf{R}_0\bigg(\int_{t-\tau_n}^{t} \!\!\!\!\!\!\!\dot{\mathbf{x}}(\theta) d\theta \!\bigg)\nonumber \\[0.2cm]
     = -\frac{1}{\tau_n} \big[\mathbf{x}(t) - \mathbf{x}(t-\tau_n) \big]^{\text{T}}\mathbf{R}_0\big[\mathbf{x}(t) - \mathbf{x}(t-\tau_n) \big]. 
     \label{eq:lyapderbound}
\end{align}

Next, in order to derive a relaxed final condition and be able to formulate the final results as an LMI suitable for the synthesis conditions, we use the descriptor technique \cite{Fridman2014}. Introducing three slack variables $\mathbf{V}_1$, $\mathbf{V}_2$, and $\mathbf{V}_3$ and using the LPV system dynamics (\ref{LPVsystem_inputoutput}), we define $\mathcal{I}$ as:

\begin{small}
\begin{equation}
\begin{array}{l}
    \:\:\: \mathcal{I}= \bigg[\mathbf{x}^\text{T}(t)\mathbf{V}^{\text{T}}_1 + \dot{\mathbf{x}}^\text{T}(t)\mathbf{V}^{\text{T}}_2 + \mathbf{x}^\text{T}(t-\tau_n)\mathbf{V}^{\text{T}}_3 \bigg] \nonumber\\[0.4cm]
    \!\bigg(\!\mathbf{A}\mathbf{x}(t) \!+\!  \mathbf{A}_{\tau} \mathbf{x}(t-\tau_n) \!+\!  \mu \mathbf{A}_\tau\mathbb{X}^{-1}\mathbf{u}_1(t) \!+\! \gamma^{-1}\mathbf{B}_1\bar{\mathbf{w}}(t)\!-\! \dot{\mathbf{x}}(t) \!\!\bigg)\! =\! 0.
    \label{eq:I}
\end{array}
\end{equation}
\end{small}

\noindent Considering the augmented forward system with all inputs and outputs, \textit{i.e.} 
$\begin{bmatrix}
\mathbf{y}_1\\
\mathbf{z}
\end{bmatrix}$ = $\mathbf{G}$ $\begin{bmatrix}
\mathbf{u}_1\\
\bar{\mathbf{w}}
\end{bmatrix}$ as in Fig. \ref{fig:interconnectedsys}, the assumption $\vert\vert \mathbf{G} \vert\vert_{i,2}<1$ is equivalent to \cite{Fridman2014} 

\begin{equation}
    \vert\vert \mathbf{y}_1 \vert\vert^2_{\mathcal{L}_2} + \vert\vert \mathbf{z} \vert\vert^2_{\mathcal{L}_2} < \vert\vert \mathbf{u}_1 \vert\vert^2_{\mathcal{L}_2}+ \vert\vert \bar{\mathbf{w}} \vert\vert^2_{\mathcal{L}_2}.
    \label{eq:perfcond}
\end{equation}
Inequality (\ref{eq:perfcond}) satisfies both the condition given in Lemma \ref{lem:lemma1} for the input-output stability of the LPV system with uncertain time-delay, \textit{i.e.} $\Vert \mathbf{T}_{\mathbf{y}_1\mathbf{u}_1}\Vert_{i,2} < \dfrac{1}{\overline{\tau} \sqrt{\mathscr{F}(\nu_\tau)}}$, and also the prescribed performance level given in (\ref{eq:Performance Index}), \textit{i.e.} $\Vert \mathbf{T}_{\mathbf{z}\mathbf{w}}\Vert_{i,2} < \gamma$. Finally, the augmented derivative of the LKF given in (\ref{eq:derlyap}) by the descriptor method's result and (\ref{eq:perfcond}) is % , $2\mathcal{I}=0$, and  $\mathbf{y}_1^\text{T}(t)\mathbf{y}_1(t) + \mathbf{z}^\text{T}(t)\mathbf{z}(t) - \mathbf{u}_1^\text{T}(t)\mathbf{u}_1(t) - \bar{\mathbf{w}}^\text{T}(t)\bar{\mathbf{w}}(t)  < 0$ obtained from (\ref{eq:perfcond}), we are able to determine the final inequality condition to investigate the stability and the prescribed performance level of a general LPV time-delay system with uncertain time-delay. Such an inequality is shown as follows and is formulated in term of an LMI:
\begin{align}
     \dot{V} (\mathbf{x}_t,& \dot{\mathbf{x}}_t, \boldsymbol{\rho}, t) +  2\mathcal{I} + \mathbf{y}_1^\text{T}(t)\mathbf{y}_1(t) + \mathbf{z}^\text{T}(t)\mathbf{z}(t) \nonumber \\[0.05cm]
     & - \mathbf{u}_1^\text{T}(t)\mathbf{u}_1(t) - \bar{\mathbf{w}}^\text{T}(t)\bar{\mathbf{w}}(t)  \leq \boldsymbol{\zeta}^\text{T}(t)\boldsymbol{\Omega}\boldsymbol{\zeta}(t) < 0,
     \label{eq:conditionineq}
\end{align}
\noindent where the augmented state vector $\boldsymbol{\zeta}(t)$ is defined as:
\begin{equation}
\boldsymbol{\zeta}^\text{T}(t) \triangleq  
\big[\!\begin{array}{ccccc}
\mathbf{x}^\text{T}(t)  & 
\dot{\mathbf{x}}^\text{T}(t) &
\mathbf{x}^\text{T}(t-\tau_n) &
\mathbb{X}^{-1} \mathbf{u}_1(t) &
\bar{\mathbf{w}}(t)
\end{array}\!\big].
\end{equation}

\noindent Using the bound computed for the integral term in the LKF time-derivative (\ref{eq:lyapderbound}), and substituting the dynamics vectors from (\ref{LPVsystem_inputoutput}) in (\ref{eq:conditionineq}), $\boldsymbol{\Omega}$ is obtained as 

\begin{small}
\begin{equation}
    \begin{array}{l}
    \left[\begin{array}{cc}
\!\!\dot{\mathbf{P}}\!-\!\mathbf{R}+\mathbf{Q}+\!\mathbf{V}_{1}^{\text{T}}\mathbf{A}\!+\!\mathbf{A}^{\text{T}}\mathbf{V}_{1}\! + \mathbf{C}_{1}^{\text{T}} \mathbf{C}_{1}& \!\!\mathbf{P}\!-\mathbf{V}_{1}^{\text{T}}+\mathbf{A}^{\text{T}}\mathbf{V}_{2} \\[0.1cm]
\star & \!\!\!\!\!\tau_n^2 \mathbf{R}\! +\! \mathscr{F}(\nu_\tau)\mathbf{S}\!-\!\mathbf{V}_2^{\text{T}}\!-\!\mathbf{V}_2 \\[0.1cm]
\star & \star \\[0.1cm]
\star & \star \\[0.1cm]
\star & \star 
\end{array}\right.\\[45pt]
 \left.\begin{array}{ccc}
\!\!\!\!\boldsymbol{\Omega}_{13} & \!\!\!\!\!\mu(\mathbf{V}_{1}^{\text{T}}\mathbf{A}_{\tau} + \mathbf{C}_{1}^{\text{T}} \mathbf{C}_{1, \tau})& \!\!\gamma^{-1}(\mathbf{V}_{1}^{\text{T}}\mathbf{B}_{1} + \mathbf{C}_{1}^{\text{T}} \mathbf{D}_{11})  \\[0.15cm]
\!\!\! \mathbf{V}_{2}^{\text{T}}\mathbf{A}_{\tau}-\!\mathbf{V_{3}} & \!\!\!\!\!\!\!\mu\mathbf{V}_{2}^{\text{T}}\mathbf{A}_{\tau} & \!\!\mathbf{V}_{2}^{\text{T}}\mathbf{B}_{1} \\[0.15cm]
\!\!\!\boldsymbol{\Omega}_{33} & \!\!\!\!\!\mu(\mathbf{V}_{3}^{\text{T}}\mathbf{A}_{\tau} + \mathbf{C}_{1, \tau}^{\text{T}} \mathbf{C}_{1, \tau}) &\!\! \gamma^{-1} (\mathbf{V}_{3}^{\text{T}}\mathbf{B}_{1} + \mathbf{C}_{1, \tau}^{\text{T}} \mathbf{D}_{11}) \\[0.15cm]
\!\!\!\star & \!\!\!\!\!\!-\mathbf{S} + \mu^2 \mathbf{C}_{1, \tau}^{\text{T}} \mathbf{C}_{1, \tau} & \!\!\gamma^{-1}\mu\mathbf{C}_{1, \tau}^{\text{T}} \mathbf{D}_{11} \\[0.15cm]
\!\!\!\!\star & \!\!\!\!\!\!\star & \!\!\gamma^{-2} \mathbf{D}_{11}^{\text{T}}\mathbf{D}_{11}-\mathbf{I}
\end{array}\!\!\!\right]\!\!,
\end{array}
\label{eq:LMIinitial}
\end{equation}
\end{small}

\noindent where $\boldsymbol{\Omega}_{13}=\mathbf{R}+\mathbf{V}_{1}^{\text{T}}\mathbf{A}_{\tau}+\mathbf{A}^{\text{T}}\mathbf{V}_{3} + \mathbf{C}_{1}^{\text{T}} \mathbf{C}_{1, \tau}$, $\boldsymbol{\Omega}_{33}=-\mathbf{R}-\mathbf{Q}+\mathbf{V}_{3}^{\text{T}}\mathbf{A}_{\tau}+\mathbf{A}_{\tau}^{\text{T}}\mathbf{V}_{3} + \mathbf{C}_{1, \tau}^{\text{T}} \mathbf{C}_{1, \tau}$, $\mathbf{S}(\boldsymbol{\rho}(t)) = \mathbb{X}^{\text{T}}\mathbb{X}$, and $\mathbf{R}=\dfrac{\mathbf{R}_0}{\tau_n}$. We then pre- and post-multiply (\ref{eq:LMIinitial}) by $diag(\mathbf{I}, \mathbf{I}, \mathbf{I}, \mathbf{I}, \gamma \mathbf{I})$ and its transpose, and apply the Schur complement lemma to (\ref{eq:conditionineq}), to obtain the LMI (\ref{LMI_unforced}) and the proof is accomplished.

% \begin{remark}
% \end{remark}
\end{proof}

\subsection{State-Feedback LPV Controller Design}
\label{sec:LPVcontrolsubsection}
In this part, we extend the results of Theorem \ref{thm:thm1} for the synthesis of a robust state-feedback gain-scheduling $\mathcal{H}_\infty$ controller for the case of general LPV systems with an uncertain varying time-delay as in (\ref{LPVsystem}). Such a parameter-dependent controller is proposed in the following format:
\begin{equation}
    \mathbf{u}(t) = \mathbf{K}(\boldsymbol{\rho}(t)) \mathbf{x}(t),
    \label{eq:statefeedbackcontrollaw}
\end{equation}
where the controller utilizes full-state information and aims to meet the design objectives as mentioned in Section \ref{sec:Stability and L_2 analysis}. Feeding back the control law (\ref{eq:statefeedbackcontrollaw}) into the LPV system dynamics (\ref{LPVsystem}), the resultant closed-loop system will be
\begin{small}
\begin{equation}
\begin{array}{l}
 \!\!\!\!\dot{\mathbf{x}}(t)\!\! =\!\! \mathbf{A}_{cl}(\boldsymbol{\rho}(t)) \mathbf{x}(t)+\mathbf{A}_\tau(\boldsymbol{\rho}(t)) \mathbf{x}\big(t-\tau (\boldsymbol{\rho}(t))\big) + \mathbf{B}_1(\boldsymbol{\rho}(t))\mathbf{w}(t),\\[0.2cm] 
  \!\!\!\!\mathbf{z}(t) \!\! =\!\! \mathbf{C}_{1,cl}(\boldsymbol{\rho}(t)) \mathbf{x}(t)\! + \! \mathbf{C}_{1, \tau}(\boldsymbol{\rho}(t)) \mathbf{x}\big(t\!-\!\tau (\boldsymbol{\rho}(t))\big)\! +\! \mathbf{D}_{11}(\boldsymbol{\rho}(t)) \mathbf{w}(t),
\end{array}
\label{LPVsystem closed-loop}
\end{equation}
\end{small}

\noindent where $\mathbf{A}_{cl}(\boldsymbol{\rho}(t))=\mathbf{A}(\boldsymbol{\rho}(t))+\mathbf{B}_2(\boldsymbol{\rho}(t))\mathbf{K}(\boldsymbol{\rho}(t))$, $\mathbf{C}_{1, cl}(\boldsymbol{\rho}(t))=\mathbf{C}_1(\boldsymbol{\rho}(t))+\mathbf{D}_{12}(\boldsymbol{\rho}(t))\mathbf{K}(\boldsymbol{\rho}(t))$. By substituting $\mathbf{A}_{cl}(\boldsymbol{\rho}(t))$ and $\mathbf{C}_{1, cl}(\boldsymbol{\rho}(t))$ for $\mathbf{A}$ and $\mathbf{C}_1$ in (\ref{LMI_unforced}), the following theorem presents a sufficient condition for investigating the closed-loop stability and performance with an uncertain delay via such an LPV control design.

\begin{theorem}\label{thm:thm2} There exists a state-feedback gain-scheduling LPV controller (\ref{eq:statefeedbackcontrollaw}), over the sets $\mathscr{F}^\nu _\mathscr{P}$ and $\mathscr{T}^{\nu_\tau}$, to provide the closed-loop system (\ref{LPVsystem closed-loop}) with asymptotic stability and the induced $\mathcal{L}_2$-norm performance index given in (\ref{eq:Performance Index}), if there exist continuously differentiable parameter dependent positive-definite matrix functions $\widetilde{\mathbf{P}}(\boldsymbol{\rho}(t))$ , $\widetilde{\mathbf{S}}(\boldsymbol{\rho}(t)): \mathscr{F}^\nu_\mathscr{P}\rightarrow\mathbb{S}^{n}_{++}$, positive-definite matrices $\widetilde{\mathbf{Q}}$, $\widetilde{\mathbf{R}} \in \mathbb{S}^{2n}_{++}$, parameter dependent real matrix functions $\mathbf{U}(\boldsymbol{\rho}(t)): \mathscr{F}^\nu_\mathscr{P}\rightarrow\mathbb{R}^{n\times n}$, $\mathbf{Y}(\boldsymbol{\rho}(t)): \mathscr{F}^\nu_\mathscr{P}\rightarrow\mathbb{R}^{n_u\times n}$, a positive scalar $\mathbf{\gamma}$, and real scalars $\lambda_2$ and $\lambda_3$ such that the LMI (\ref{eq:LMIClosedloop}) is feasible with $\boldsymbol{\Xi}_{11}=\dot{\widetilde{\mathbf{P}}}-\widetilde{\mathbf{R}}+\widetilde{\mathbf{Q}}+\mathbf{A}\mathbf{U} + \mathbf{U}^{\text{T}}\mathbf{A}^{\text{T}}+ \mathbf{B}_{2} \mathbf{Y} + \mathbf{Y}^{\text{T}}\mathbf{B}_2^{\text{T}}$. Finally, such a control law can then be computed as follows

\begin{equation}
    \mathbf{u}(t) = \mathbf{Y}(\boldsymbol{\rho}(t)) \mathbf{U}^{-1}(\boldsymbol{\rho}(t)) \mathbf{x}(t).
\end{equation}

\begin{figure*}[!t]
\begin{equation}
\!\!\left[\!\!\begin{array}{cccccc}
\boldsymbol{\Xi}_{11}& \widetilde{\mathbf{P}}\!-\mathbf{U}\!+\lambda_2(\mathbf{U}^{\text{T}}\mathbf{A}^{\text{T}}\!+\!\mathbf{Y}^{\text{T}}\mathbf{B}_2^{\text{T}}) & \widetilde{\mathbf{R}} + \mathbf{A}_{\tau}\mathbf{U} + \lambda_3(\mathbf{U}^{\text{T}}\mathbf{A}^{\text{T}}+\mathbf{Y}^{\text{T}}\mathbf{B}_2^{\text{T}}) & \mu \mathbf{A}_{\tau}\mathbf{U} & \mathbf{B}_1 & \mathbf{U}^\text{T}\mathbf{C}_1^\text{T}+\mathbf{Y}^\text{T}\mathbf{D}_{12}^\text{T}\\[0.1cm]
\star & \tau_n^2 \widetilde{\mathbf{R}}\!+\! \mathscr{F}(\nu_\tau)\widetilde{\mathbf{S}}\!-\!\lambda_2(\mathbf{U}\!+\mathbf{U}^\text{T}) & \lambda_2 \mathbf{A}_{\tau}\mathbf{U} - \lambda_3 \mathbf{U}^\text{T} & \lambda_2 \mu \mathbf{A}_{\tau}\mathbf{U} & \lambda_2 \mathbf{B}_1 & \mathbf{0}\\[0.1cm]
\star & \star & -\widetilde{\mathbf{R}}-\widetilde{\mathbf{Q}} + \lambda_3( \mathbf{A}_{\tau}\mathbf{U}+\mathbf{U}^{\text{T}}\mathbf{A}_{\tau}^{\text{T}}) & \lambda_3 \mu \mathbf{A}_{\tau}\mathbf{U} & \lambda_3 \mathbf{B}_1 & \mathbf{U}^\text{T}\mathbf{C}_{1, \tau}^\text{T}\\[0.1cm]
\star & \star & \star & -\widetilde{\mathbf{S}} & \mathbf{0} & \mu\mathbf{U}^\text{T}\mathbf{C}_{1, \tau}^\text{T} \\[0.1cm]
\star & \star & \star & \star  & -\gamma^2 \mathbf{I} & \mathbf{D}_{11}^{\text{T}}\\[0.1cm]
\star & \star & \star & \star & \star & -\mathbf{I}
\end{array}\!\!\!\right]\!\! \prec\! \mathbf{0}
\label{eq:LMIClosedloop}
\end{equation}
\end{figure*}

\end{theorem}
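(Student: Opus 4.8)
The plan is to obtain (\ref{eq:LMIClosedloop}) from Theorem~\ref{thm:thm1} by a congruence transformation that converts the (bilinear) synthesis inequality into an LMI in the design variables. Since the closed-loop system (\ref{LPVsystem closed-loop}) has exactly the structure of the unforced plant (\ref{LPVsystem}) with $\mathbf{A}$ replaced by $\mathbf{A}_{cl}=\mathbf{A}+\mathbf{B}_2\mathbf{K}$ and $\mathbf{C}_1$ replaced by $\mathbf{C}_{1,cl}=\mathbf{C}_1+\mathbf{D}_{12}\mathbf{K}$, Theorem~\ref{thm:thm1} already guarantees closed-loop asymptotic stability together with $\Vert\mathbf{z}\Vert_2\le\gamma\Vert\mathbf{w}\Vert_2$ as soon as LMI (\ref{LMI_unforced}) holds with $\mathbf{A}$, $\mathbf{C}_1$ replaced by $\mathbf{A}_{cl}$, $\mathbf{C}_{1,cl}$. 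The obstacle is that this inequality is bilinear: the slack matrices $\mathbf{V}_1,\mathbf{V}_2,\mathbf{V}_3$ and the Lyapunov matrices now multiply the unknown gain through $\mathbf{V}_i^{\text{T}}\mathbf{B}_2\mathbf{K}$ and $\mathbf{K}^{\text{T}}\mathbf{D}_{12}^{\text{T}}$, so it is not directly an LMI.

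To linearize, I would first restrict the slack variables to the structured form $\mathbf{V}_2=\lambda_2\mathbf{V}_1$, $\mathbf{V}_3=\lambda_3\mathbf{V}_1$, with $\lambda_2,\lambda_3$ scalar tuning parameters and $\mathbf{V}_1$ a nonsingular matrix, and then set $\mathbf{U}=\mathbf{V}_1^{-1}$, $\mathbf{Y}=\mathbf{K}\mathbf{U}$, $\widetilde{\mathbf{P}}=\mathbf{U}^{\text{T}}\mathbf{P}\mathbf{U}$, $\widetilde{\mathbf{Q}}=\mathbf{U}^{\text{T}}\mathbf{Q}\mathbf{U}$, $\widetilde{\mathbf{R}}=\mathbf{U}^{\text{T}}\mathbf{R}\mathbf{U}$, $\widetilde{\mathbf{S}}=\mathbf{U}^{\text{T}}\mathbf{S}\mathbf{U}$, all positive definite because $\mathbf{U}$ is nonsingular. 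Congruence-transforming the closed-loop version of (\ref{LMI_unforced}) by $\mathrm{diag}(\mathbf{U}^{\text{T}},\mathbf{U}^{\text{T}},\mathbf{U}^{\text{T}},\mathbf{U}^{\text{T}},\mathbf{I},\mathbf{I})$ and its transpose, and using repeatedly that $\mathbf{V}_1\mathbf{U}=\mathbf{I}$ (hence $\mathbf{U}^{\text{T}}\mathbf{V}_1^{\text{T}}=\mathbf{I}$ and $\mathbf{U}^{\text{T}}\mathbf{V}_1\mathbf{U}=\mathbf{U}^{\text{T}}$), every slack--gain product collapses. For instance the $(1,1)$ block becomes $\mathbf{U}^{\text{T}}\dot{\mathbf{P}}\mathbf{U}-\widetilde{\mathbf{R}}+\widetilde{\mathbf{Q}}+\mathbf{A}_{cl}\mathbf{U}+\mathbf{U}^{\text{T}}\mathbf{A}_{cl}^{\text{T}}=\mathbf{U}^{\text{T}}\dot{\mathbf{P}}\mathbf{U}-\widetilde{\mathbf{R}}+\widetilde{\mathbf{Q}}+\mathbf{A}\mathbf{U}+\mathbf{B}_2\mathbf{Y}+\mathbf{U}^{\text{T}}\mathbf{A}^{\text{T}}+\mathbf{Y}^{\text{T}}\mathbf{B}_2^{\text{T}}$, and the $(1,2)$ block becomes $\widetilde{\mathbf{P}}-\mathbf{U}+\lambda_2(\mathbf{U}^{\text{T}}\mathbf{A}^{\text{T}}+\mathbf{Y}^{\text{T}}\mathbf{B}_2^{\text{T}})$; proceeding block by block reproduces (\ref{eq:LMIClosedloop}) with $\boldsymbol{\Xi}_{11}$ as stated. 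Nonsingularity of $\mathbf{U}$ -- required both for the change of variables and to recover $\mathbf{K}=\mathbf{Y}\mathbf{U}^{-1}$, i.e. $\mathbf{u}=\mathbf{Y}(\boldsymbol{\rho})\mathbf{U}^{-1}(\boldsymbol{\rho})\mathbf{x}$ -- follows automatically: the $(2,2)$ block of (\ref{eq:LMIClosedloop}) being negative definite forces $\lambda_2(\mathbf{U}+\mathbf{U}^{\text{T}})\succ\tau_n^2\widetilde{\mathbf{R}}+\mathscr{F}(\nu_\tau)\widetilde{\mathbf{S}}\succ\mathbf{0}$, so $\mathbf{U}+\mathbf{U}^{\text{T}}$ is definite and $\mathbf{U}$ is invertible.

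The delicate point, and the step I expect to be the real obstacle, is the parameter-derivative term: the congruence sends $\dot{\mathbf{P}}$ to $\mathbf{U}^{\text{T}}\dot{\mathbf{P}}\mathbf{U}$, whereas (\ref{eq:LMIClosedloop}) is written with $\dot{\widetilde{\mathbf{P}}}=\sum_i\dot{\rho}_i\,\partial\widetilde{\mathbf{P}}/\partial\rho_i$, and these agree only if the cross terms $\dot{\mathbf{U}}^{\text{T}}\mathbf{P}\mathbf{U}+\mathbf{U}^{\text{T}}\mathbf{P}\dot{\mathbf{U}}$ are suppressed. I would resolve this by the standard device of taking $\mathbf{V}_1$ (hence $\mathbf{U}$) parameter-independent in the differentiated terms, so that $\mathbf{U}^{\text{T}}\dot{\mathbf{P}}\mathbf{U}=\dot{\widetilde{\mathbf{P}}}$, at the price of some conservatism; together with the structured choice $\mathbf{V}_2=\lambda_2\mathbf{V}_1$, $\mathbf{V}_3=\lambda_3\mathbf{V}_1$, this is precisely why $\lambda_2,\lambda_3$ enter (\ref{eq:LMIClosedloop}) as free scalars over which one performs a line search. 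With these conventions, feasibility of (\ref{eq:LMIClosedloop}) is equivalent to feasibility of the closed-loop form of (\ref{LMI_unforced}) under the structured slack parametrization, and invoking Theorem~\ref{thm:thm1} on the closed loop yields the asserted closed-loop asymptotic stability and induced $\mathcal{L}_2$-gain bound, with the controller recovered as $\mathbf{K}(\boldsymbol{\rho})=\mathbf{Y}(\boldsymbol{\rho})\mathbf{U}^{-1}(\boldsymbol{\rho})$.
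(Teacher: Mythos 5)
Your proposal follows essentially the same route as the paper's proof: substitute the closed-loop matrices $\mathbf{A}_{cl}$, $\mathbf{C}_{1,cl}$ into the condition of Theorem~\ref{thm:thm1}, impose the structured slacks $\mathbf{V}_2=\lambda_2\mathbf{V}_1$, $\mathbf{V}_3=\lambda_3\mathbf{V}_1$, apply the congruence $\mathrm{diag}(\mathbf{U}^{\text{T}},\mathbf{U}^{\text{T}},\mathbf{U}^{\text{T}},\mathbf{U}^{\text{T}},\mathbf{I},\mathbf{I})$ with $\mathbf{U}=\mathbf{V}_1^{-1}$, and change variables to $\mathbf{Y}=\mathbf{K}\mathbf{U}$ and the tilde quantities. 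Your two added observations --- deducing invertibility of $\mathbf{U}$ from the $(2,2)$ block and reconciling $\mathbf{U}^{\text{T}}\dot{\mathbf{P}}\mathbf{U}$ with $\dot{\widetilde{\mathbf{P}}}$ --- address real details that the paper's one-paragraph proof passes over silently, and your resolutions are sound.
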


\begin{proof} 
First, we substitute the closed-loop system matrices in the LMI condition (\ref{eq:LMIinitial}) given by Theorem \ref{thm:thm1}, \textit{i.e.} $\mathbf{A}_{cl}$ for $\mathbf{A}$ and $\mathbf{C}_{1, cl}$ for $\mathbf{C}_1$. Next, in order to obtain tractable convex results, we select the slack variables as $\mathbf{V}_1=\mathbf{V}\in \mathbb{R}^{n \times n}$, $\mathbf{V}_2= \lambda_2 \mathbf{V}$, and $\mathbf{V}_3= \lambda_3 \mathbf{V}$ followed by performing a congruent transformation $diag \big(\mathbf{U}^\text{T}, \mathbf{U}^\text{T}, \mathbf{U}^\text{T}, \mathbf{U}^\text{T}, \mathbf{I}, \mathbf{I} \big)$ on (\ref{eq:LMIinitial}). Then, we define the resultant matrix multiplications as $\mathbf{U}^\text{T} \boxdot \mathbf{U} = \widetilde{\boxdot}$ and the new decision variables as $\mathbf{U}=\mathbf{V}^{-1}$ and $\mathbf{Y}=\mathbf{K}\mathbf{U}$ by which the final LMI (\ref{eq:LMIClosedloop}) is obtained and the proof is complete.
\end{proof}

In the next section, we will address the MAP response regulation control problem as a numerical case study to assess the efficiency and robust performance of the proposed gain-scheduling LPV control scheme in several simulation scenarios.

\section{Numerical Example Results and Discussions}\label{sec:Num Ex Res}

\subsection{MAP Response to PHP Drug Dynamics and LPV Modeling}

The investigated application is motivated by the problem of automated MAP regulation using vasoactive drug infusion in critical hypotensive patients resuscitation, such as severe hemorrhage, maternal cesarean hypotension, severe burn, and traumatic brain injury where the inherent feedback loop of body fails to maintain the homeostasis \cite{hollenberg2007vasopressor}. Conventional manual drug administration methods using syringe or infusion pumps to regulate the blood pressure to a desired value, are considered to be labor-intensive, sluggish, and inaccurate, and can lead to fatal consequences and patient's death \cite{cao2020blood}. Thus, designing an advanced control scheme for such a challenging task to automate and computerize the MAP regulation has gained considerable attention, recently \cite{wassar2014automatic, luspay2016adaptive, tasoujian2019robust, tasoujian2019delay, tasoujian2020robust}.

To capture the  MAP response dynamics subject to vasoactive drug infusion such as phenylephrine (PHP), the following LPV model is utilized \cite{tasoujian2020robust,  tasoujian2020real}:

\begin{equation}
\begin{matrix}
 \dot{x}(t) & = & -\dfrac{1}{T(t)} x(t) + \dfrac{K(t)}{T(t)} u(t-\tau(t)),\\[0.4cm]
 y(t)& = & x(t) + d_o(t),\:\:\:\:\:\:\:\:\:\:\:\:\:\:\:\:\:\:\:\:\:\:\:\:\:\:\:\:\:\:\:\:\:
\end{matrix}
\label{eq:MAP state space}
\end{equation}

\noindent where the state variable is considered to capture the MAP variations in $mmHg$ from its baseline value, \textit{i.e.} $x(t) = \Delta MAP (t)= MAP (t) - MAP_b(t)$, $u(t)$ is the drug infusion rate in $ml/h$, $y(t)$ is the patient's measured MAP response output in $mmHg$, $d_o(t)$ denotes the disturbance signal, $K(t)$ characterizes the patient's sensitivity to the drug, $T(t)$ denotes the lag time representing the uptake, distribution, and biotransformation of the drug, and $\tau (t)$ represents the time delay for the drug to reach the circulatory system from the infusion pump.  Fig. \ref{fig:MAPresponse} plots the actual MAP measurements due to a step PHP infusion with a matched response of the utilized model (\ref{eq:MAP state space}). The experimental data has been collected in a swine experiment performed at the Resuscitation Research Laboratory at the University of Texas Medical Branch (UTMB), Galveston, Texas \cite{luspay2016adaptive}. The figure also shows the interpretation of the model parameters $K(t)$, $T(t)$, $\tau(t)$, $MAP_b(t)$ which have been obtained using the least-squares optimization method to fit the actual MAP response measurements. 

\begin{figure}[!t] %[!htbp]
\hspace*{-.1in}
\centering \includegraphics[width=1.05\columnwidth, height=2.10in]{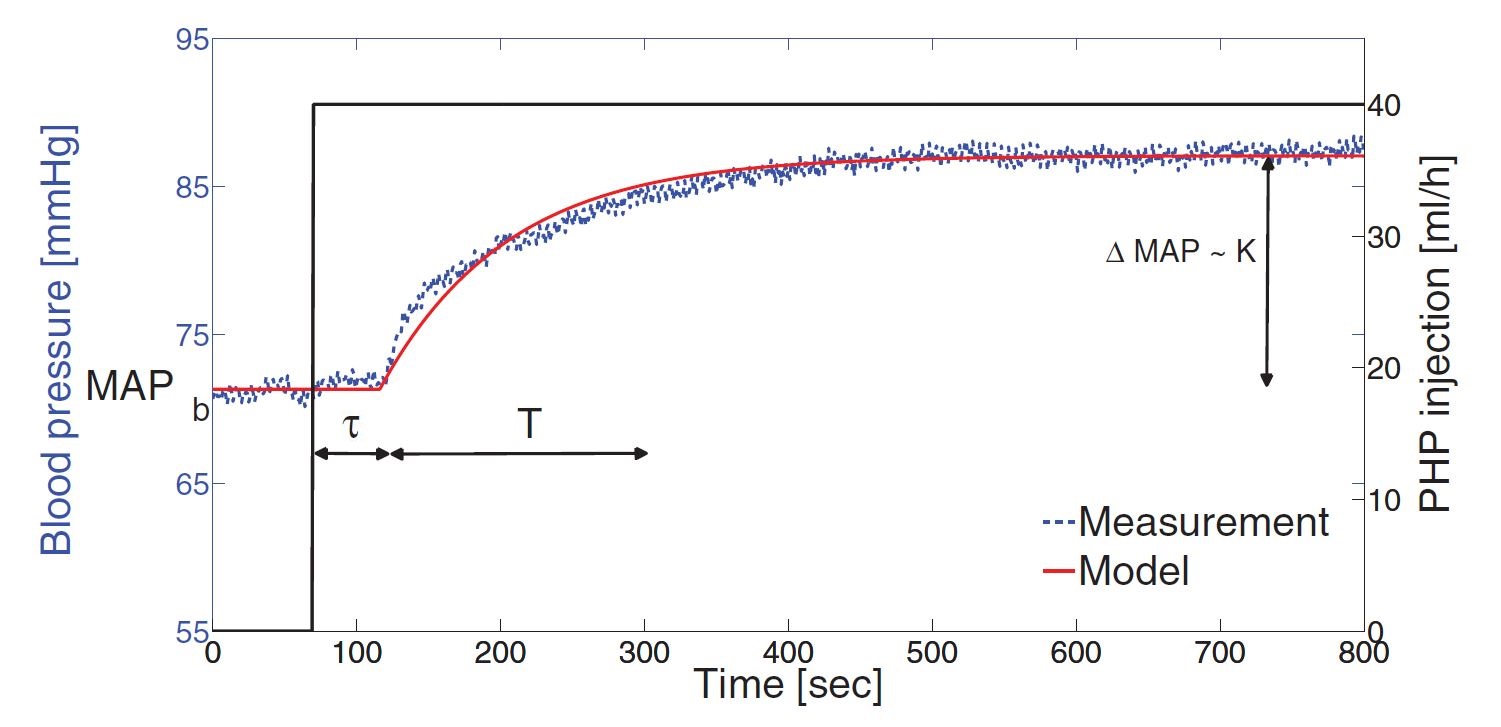} %[scale=0.8]
\caption{MAP response under step PHP drug infusion} 
\label{fig:MAPresponse}
\end{figure}
In order to utilize the proposed time-delay LPV control design framework introduced in Section \ref{sec:LPVcontrolsubsection}, we need to transform the input delay system (\ref{eq:MAP state space}) into a tractable state-delay LPV representation. For this purpose, we introduce a low-pass input dynamics:
\begin{equation}
u(s)=\frac{\Omega}{s + \Lambda} u_a (s),
\end{equation}
where $\Omega$ and $\Lambda$ are positive scalars that are selected based on the bandwidth of the actuators. Then, the state-space state-delay LPV representation of the MAP response dynamics takes the standard time-delay LPV representation (\ref{LPVsystem})
% \begin{equation}
% \begin{array}{ccl}
%  \dot{\mathbf{x}}_a(t) & = &  \mathbf{A}(\boldsymbol{\rho}(t)) \mathbf{x}_a(t)+\mathbf{A}_d(\boldsymbol{\rho}(t)) \mathbf{x}_a(t-\tau(t))\\[0.1cm]
% & + & \mathbf{B}_1(\boldsymbol{\rho}(t))\mathbf{w}(t)  + \mathbf{B}_2(\boldsymbol{\rho}(t))u(t),\\[0.2cm] 
% \mathbf{z}(t) & = & \mathbf{C}_1(\boldsymbol{\rho}(t)) \mathbf{x}(t) + \mathbf{C}_{1, \tau}(\boldsymbol{\rho}(t)) \mathbf{x}\big(t-\tau (\boldsymbol{\rho}(t))\big)\\[0.1cm] 
%   & + &\mathbf{D}_{11}(\boldsymbol{\rho}(t)) \mathbf{w}(t)+ \mathbf{D}_{12}(\boldsymbol{\rho}(t)) \mathbf{u}(t),\\[0.2cm]
%  y_a(t) & = & \mathbf{C}_2(\boldsymbol{\rho}(t)) \mathbf{x}_a(t)+ \mathbf{C}_{2d}(\boldsymbol{\rho}(t)) \mathbf{x}_a(t-\tau(t))\\[0.1cm]
%  & + &\mathbf{D}_{21}(\boldsymbol{\rho}(t)) \mathbf{w}(t),
% \end{array}
% \label{eq:lpvmap}
% \end{equation}
with the augmented state vector of the system defined as $\mathbf{x}(t):= \mathbf{x}_a (t) = [\begin{array}{ccc} x(t) & u(t) & x_e(t)\end{array}]^{\text{T}}$. $\mathbf{w}(t)=[\begin{array}{cc} r(t) & d_o(t)\end{array}]^{\text{T}}$ stands for the exogenous disturbance vector including the MAP reference command, $r(t)$, and output disturbance signals. $x_e (t)$ is defined for command tracking purposes, \textit{i.e.} $\dot{x}_e (t) = e(t) = r(t)-y(t) = r(t) - (x(t) + d_o(t))$, and thus the state space matrices of the MAP response dynamics in the LPV system representation (\ref{LPVsystem}) are as follows:
\begin{align}
		\!\!\!	& \mathbf{A}(\boldsymbol{\rho}(t)) =\begin{bmatrix}
- \frac{1}{T(t)} & 0 & 0\\ 
0 & -\Lambda & 0 \\ 
-1 & 0 & 0
\end{bmatrix}, \: \mathbf{A}_{\tau}(\boldsymbol{\rho}(t))=\begin{bmatrix}
0 & \frac{K(t)}{T(t)} & 0\\ 
0 & 0 & 0 \\ 
0 & 0 & 0
\end{bmatrix},\nonumber\\
\!\!\!		& \mathbf{B}_1(\boldsymbol{\rho}(t)) \!=\!\!\begin{bmatrix}
0 & \!0\\ 
0 & \!0\\ 
1 & \!-1
\end{bmatrix}\!\!, \mathbf{B}_2(\boldsymbol{\rho}(t))\!=\!\!\begin{bmatrix}
0\\ 
\Omega\\ 
0
\end{bmatrix}\!,  \mathbf{C}_{1}(\boldsymbol{\rho}(t)) \!=\!\! \begin{bmatrix}
0 & \!0 & \!\phi\\
0 & \!0 & \!0
\end{bmatrix}, \nonumber\\[0.1cm]
\!\!\! & \mathbf{C}_{1,\tau}(\boldsymbol{\rho}(t))\!=\!\!\mathbf{0}_{2\times3},  \!\mathbf{D}_{11}(\boldsymbol{\rho}(t)) \!=\!\!\mathbf{0}_{2\times2},  
\mathbf{D}_{12}(\boldsymbol{\rho}(t))\!=\!\!\begin{bmatrix}
0 \\ \psi
\end{bmatrix}\!\!,
% 			& \mathbf{C}_2(\boldsymbol{\rho}(t)) \!=\! \begin{bmatrix}
% 1 & 0 & 0 \end{bmatrix}\!,  \mathbf{C}_{2d}(\boldsymbol{\rho}(t))\!=\!\mathbf{0}_{1\times3}, \mathbf{D}_{21}(\boldsymbol{\rho}(t))\!=\!\begin{bmatrix}
% 0 & \!\! 1
% \end{bmatrix}.
\label{eq:matrices1}
\end{align}
\noindent where $\boldsymbol{\rho}(t) = [\begin{array}{ccc} K(t) & T(t) & \tau(t) \end{array}]^{\text{T}}$ denotes the scheduling parameter vector.
 
A major challenge in the precise MAP response regulation control problem is the patient's pharmacological variations, which means that the model parameters and delay, $K(t)$, $T(t)$, $MAP_b(t)$, and $\tau(t)$ could vary significantly from patient-to-patient (inter-patient variability), as well as, for a given patient over time (intra-patient variability) \cite{isaka1993control, rao2003experimental}. Based on clinical observations \cite{tasoujian2020real}, the model parameters variations can be characterized using the following nonlinear functions of the drug injection rate:
\begin{subequations}
	\begin{align}
		& a_k \dot{K}(t) + K(t)  = k_0 exp\{ - k_1 u(t) \} \label{eq:par_K},\\
		& T(t)= sat_{\:[T_{\min}, T_{\max}]} \: \{b_{T} \int_{0}^{t} u(t) \:dt \},\label{eq:par_T}\\
		& \begin{cases}a_{\tau,2} \dddot{\tau}(t) + a_{\tau,1} \ddot{\tau}(t) + \dot{\tau}(t) = b_{\tau,1} \dot{u}(t) + u(t), & \:\:\:\:\:\:\: t\geq t_{i_0}, \nonumber \\\tau(t)=0, & otherwise,\end{cases} \label{eq:par_tau}\\
	\end{align}
\label{eq:nlpatientparameters}\end{subequations}
\noindent where $a_k$, $k_0$, $k_1$, $b_{T}$, $a_{\tau,2}$, $a_{\tau,1}$, and $b_{\tau,1}$ are uniformly distributed random coefficients given in Table \ref{tab:coef} \cite{Craig2004}. Consequently, the parameters variation ranges are $K(t) \in [0.2\;\, 0.8]{mmHg\cdot h}/{ml}$, $T(t) \in [100\;\, 400]sec$, and $\tau(t) \in [0\:\, 70]sec$. Also, the MAP baseline value, $MAP_b(t)$, is assumed to stay at a constant $70\, mmHg$ value. For more details regarding the MAP response dynamics under drug infusion and the real-time model parameters estimation algorithm see \cite{tasoujian2020real} and the references therein.  

\begin{table}[t]
\centering
\caption{Probabilistic distributions of coefficients in (\ref{eq:nlpatientparameters}) }
\label{tab:coef}
\begin{tabular}{cc}
\hline
Parameter    & Distribution                             \\ \hline
$a_k$        & $\mathcal{U}(500, 600)$                  \\
$k_0$        & $\mathcal{U}(0.1, 1)$                    \\
$k_1$        & $\mathcal{U}(0.002, 0.007)$              \\
$b_{T}$      & $\mathcal{U}(10^{-4}, 3 \times 10^{-4})$ \\
$a_{\tau,1}$ & $\mathcal{U}(5, 15)$                     \\
$a_{\tau,2}$ & $\mathcal{U}(5, 15)$                     \\
$b_{\tau,1}$ & $\mathcal{U}(80, 120)$                   \\ \hline
\end{tabular}
\end{table}

\subsection{Automated Closed-loop MAP Regulation Simulation Results and Discussion}

In this part, we have evaluated the potential of the proposed delay-dependent LPV control framework in automated regulation of the MAP response of a simulated nonlinear patient to desired values under different scenarios. Additionally, the closed-loop MAP command tracking results have been compared to the ones of the conventionally implemented fixed-gain PI controller.

For the considered MAP response regulation problem, the controlled output vector, $\mathbf{z}(t)$, in (\ref{LPVsystem}) is defined to be $\mathbf{z}(t) = [\phi \cdot x_e (t) \quad \psi \cdot u (t)]^{\text{T}}$ as (\ref{eq:matrices1}) suggests. The weighting scalars $\phi$ and $\psi$ penalize the tracking error, $x_e(t)$, and the control effort, $u(t)$,  respectively to fulfill desired performance objectives. The gain-scheduled state-feedback controller (\ref{eq:statefeedbackcontrollaw}) has been designed to guarantee the closed-loop asymptotic stability of the LPV time-delay system and to attenuate the worst-case disturbance amplification, \textit{i.e.} minimize the suboptimal induced $\mathcal{L}_2$-norm (or $\mathcal{H}_{\infty}$-norm) of the closed-loop time-delay LPV system (\ref{eq:Performance Index}), over the entire range of the the model parameter trajectories, $\boldsymbol{\rho} \in \mathscr{F}^\nu _\mathscr{P}$, and time-delay variation, $\tau \in \mathscr{T}^{\nu_\tau}$, with the varying time-delay uncertainty lies in the range given in (\ref{eq:uncertaintimedelay}). For this purpose, the results of Theorem \ref{thm:thm2} have been employed to design a robust gain-scheduled LPV controller for calculating the drug injection rate in the automated MAP regulation case study. 

\begin{remark}
The conditions in Theorems \ref{thm:thm1} and \ref{thm:thm2} result in infinite-dimensional convex optimization problems with an infinite number of LMI constraints. To tackle this obstacle, we took advantage of the gridding technique to convert the infinite-dimensional problem to a finite-dimensional convex optimization problem \cite{apkarian1998advanced}. Moreover, a quadratic parameter dependence has been adopted as follows: $\mathbf{M}(\boldsymbol{\rho}(t))=\mathbf{M}_0 + \sum\limits_{i=1}^{n_s}\rho_i(t) \mathbf{M}_{i_1} + \frac{1}{2}\sum\limits_{i=1}^{n_s}\rho_i^2(t) \mathbf{M}_{i_2}$, where $\mathbf{M}(\boldsymbol{\rho}(t))$ represents any of the involved LMI decision matrix variables. Finally, gridding the scheduling parameter space at appropriate intervals leads to a finite set of LMIs to be solved for the unknown matrices and $\gamma$. Also, in order to improve the results, a $2$-dimensional search involving the two scalar variables $\lambda_2$, and $\lambda_3$ is performed to obtain the minimum value of $\gamma$. The MATLAB\textsuperscript{\tiny\textregistered} toolbox YALMIP with Mosek solver is used to solve the corresponding LMI optimization problems \cite{lofberg2004yalmip}.
\end{remark}

First simulation scenario is considered to assess the MAP command tracking performance of the proposed controller in the absence of any disturbance and measurement noise. Such a MAP tracking profile with the associated control effort are plotted in Fig. \ref{fig:tracking_sf} where the results of the introduced gain-scheduled LPV controller have been compared to the performance of a ubiquitously utilized PI controller taken from \cite{wassar2014automatic}. The favorable control task objectives are to regulate the MAP response to desirably track the commanded MAP profile with a minimal settling time and zero steady-state error while keeping the response overshoot within a narrow allowable range. As demonstrated in this plot, the proposed gain-scheduling controller outperforms the fixed design in satisfying the resuscitation objectives. 
\begin{figure}[!t] %[!htbp]
\hspace*{-.12in}
%\centering \includegraphics[width=3.45in,height=1.8in]{compare_tracking.eps}
\centering \includegraphics[width=\columnwidth, height=2.1in]{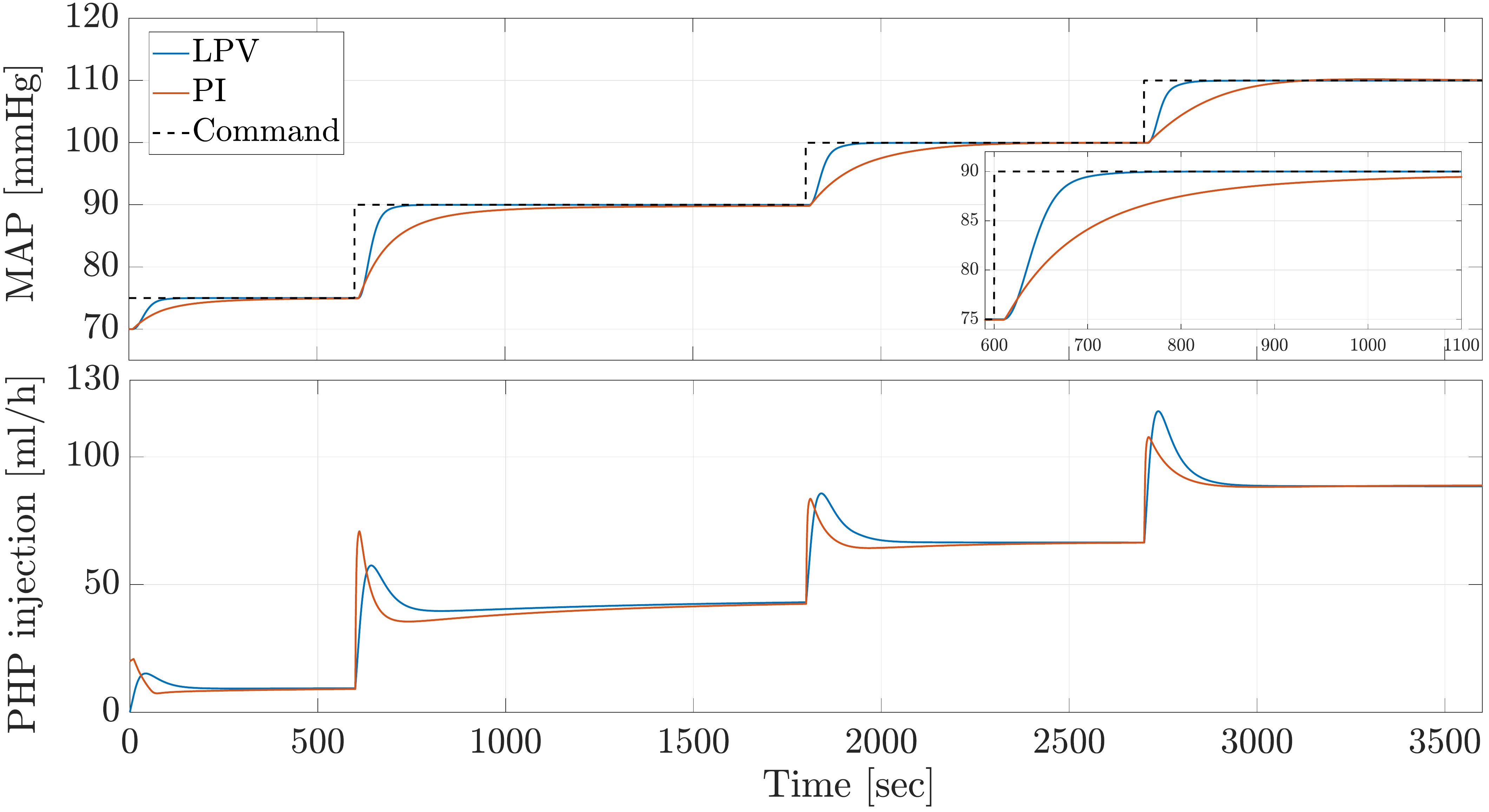}
\caption{MAP response tacking performance and PHP injection rate (control effort) of LPV controller against fixed structure PI controller for disturbance and noise free case} 
%change captions
\label{fig:tracking_sf}
\end{figure} 

Furthermore, during the MAP regulation process via drug infusion, a patient's MAP response could be influenced by factors other than the vasoactive drug administration such as hemorrhage, unmodeled physiological variations, medications interference like lactated ringers (LR) or sodium nitroprusside (SNP), and any other medical interventions. Fig. \ref{fig:Disturbance} shows a typical profile of such incidents modeled as a disturbance signal. Accordingly, a new scenario has been generated in the simulation environment and Fig. \ref{fig:trackingdistnoise_sf} depicts the performance of the proposed LPV and PI controllers, where the closed-loop system is subject to measurement noise and output disturbances. The considered measurement noise is assumed to be a white noise signal with the intensity of $10^{-3}$. As illustrated, the proposed LPV control method, due to its scheduling structure and robustness in the design, demonstrates a superior MAP command tracking performance with respect to the rise time and speed of the response while desirably rejecting the disturbances and measurement noise.  
\begin{figure}[!t] %[!htbp]
\hspace*{-.12in}
\centering \includegraphics[width=\columnwidth, height=1.85in]{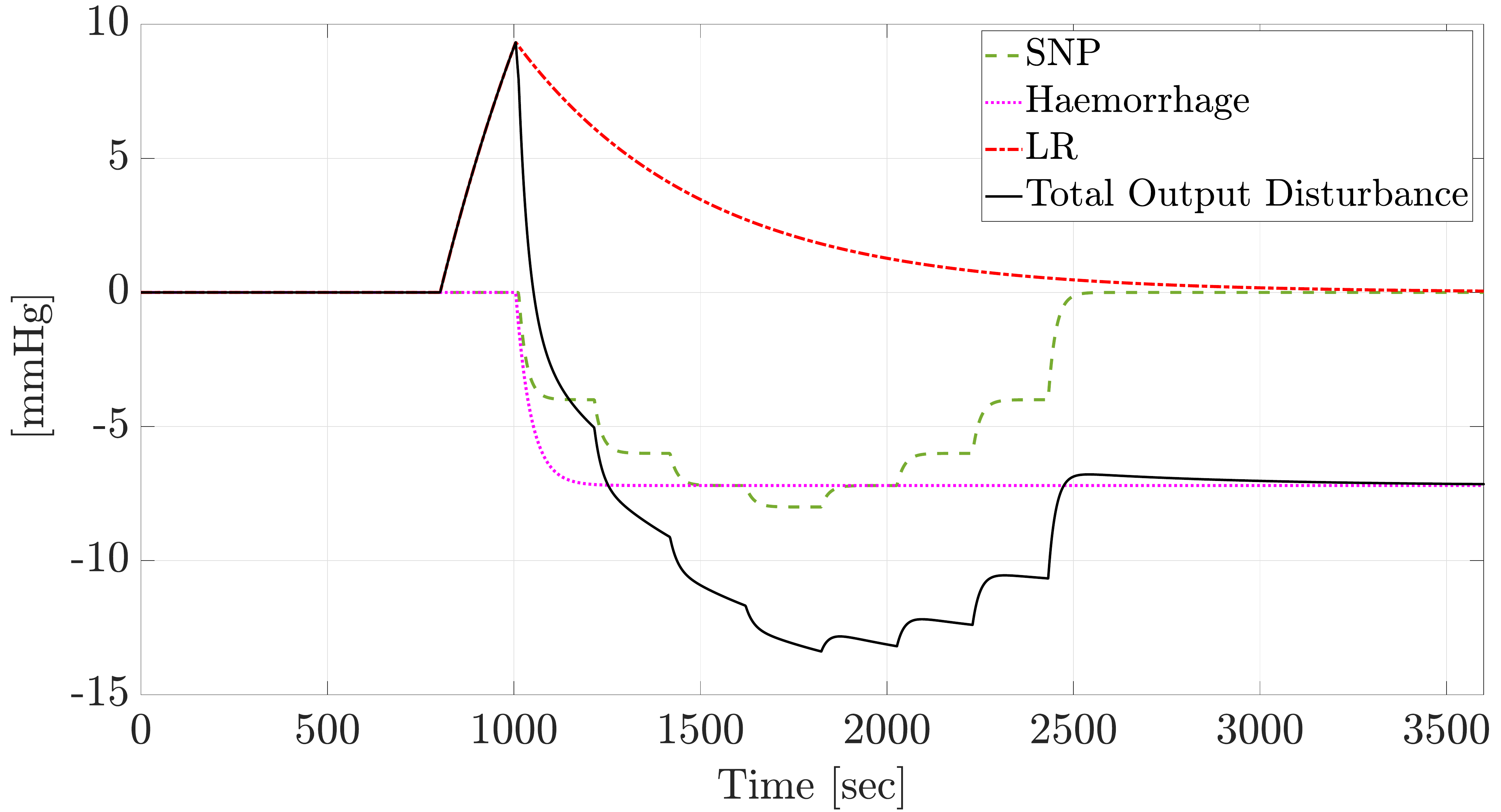}
\caption{Output disturbances profiles} 
\label{fig:Disturbance}
\end{figure}

\begin{figure}[!t]
\includegraphics[width=\columnwidth, height=2.1in]{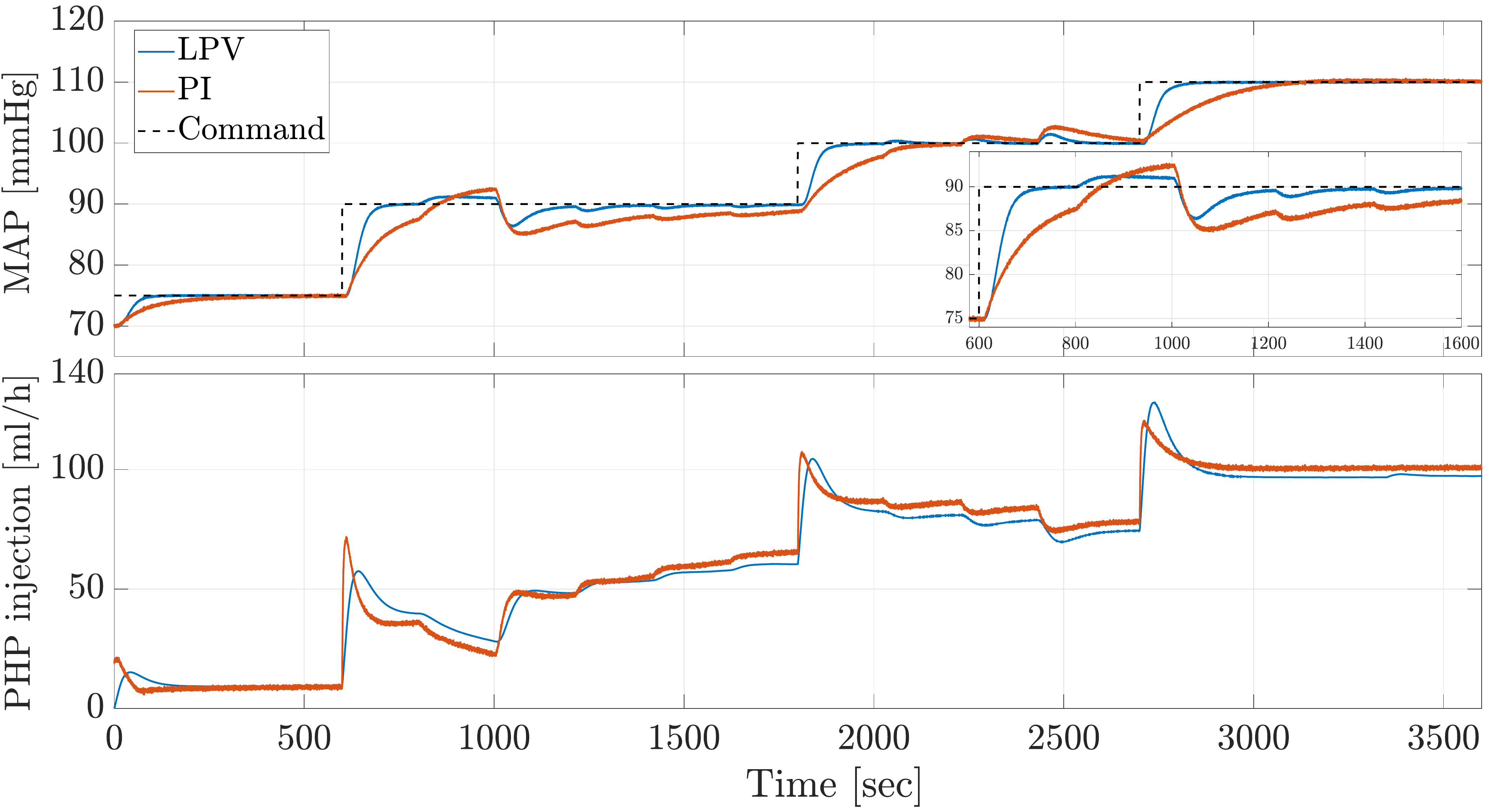}
\caption{MAP response tacking performance and PHP injection rate of LPV controller against fixed structure PI controller under disturbance and measurement noise}
\label{fig:trackingdistnoise_sf}
\end{figure}

Finally, to examine the robustness of the proposed control scheme in handling the time-varying delay uncertainty, we created a scenario in which the model's varying time-delay has been under-estimated by $50 \%$. The closed-loop MAP response of the system with the proposed robust LPV controller has been compared to the response of a fixed-gain PI controller. Fig. \ref{fig:trackingunc_sf} depicts the MAP tracking result of both controllers in this scenario and shows that the PI controller which is designed without considering the time-varying delay uncertainty, demonstrates undesirable oscillatory performance and higher overshoots both in the closed-loop MAP response tracking and also in the PHP injection control input signal. 
\begin{figure}[!t]
\includegraphics[width=\columnwidth, height=2.1in]{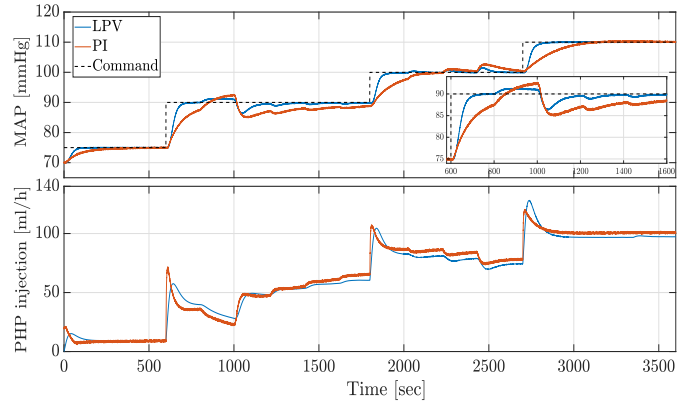}
\caption{MAP response tacking performance and PHP injection rate of LPV controller against fixed structure PI controller under time delay uncertainty}
\label{fig:trackingunc_sf}
\end{figure}
% add instability problem here for other not robust controller when we have big time-delay unc.
The design parameters are depicted in Table \ref{tab:Parameters}.
\begin{table}[t]
    \caption{Design parameters and performance index}
    \centering
    \begin{tabular}{|c|c|c|c|c|c|c|}
        \hline
        $\lambda_2$ & $\lambda_3$ & $\phi$ & $\psi$ & $\Lambda$ & $\Omega$ & $\gamma$\\
        \hline
         $10.45$ & $-0.55$ & $0.5$ & $1$ & $50$ & $50$ & $33.08$\\
         \hline
    \end{tabular}
    \label{tab:Parameters}
\end{table}
\begin{remark}
% In the studied application, based on the design requirements, we considered the weighting penalizing scalars for the desired output vector to be  $\phi=0.5$, and $\xi=1$. For the time delay uncertainty, we considered $\overline{\tau} = $. Finally, for the proposed control synthesis methodology based on Theorem \ref{thm:thm2}, and by considering the same weighting scalars, quadratic basis functions, the obtained optimal energy-to-energy performance index is $\gamma = $. 
Unlike other methods for time delay systems analysis, which handle the varying time-delay uncertainty by considering the largest possible time-delay \cite{lechappe2018prediction}, proposed results based on Theorem \ref{thm:thm2} considers time-delay uncertainty explicitly in the design process, thus, it provides better disturbance attenuation, improved induced $\mathcal{L}_2$-norm performance levels, and less conservative results.
\end{remark}
To conclude, as the results of various cases suggest, the proposed robust gain-scheduled LPV control design is capable of favorably regulating the patient's MAP response to the commanded MAP values while rejecting disturbances, handling model parameter variations, and compensating for time-delay uncertainties.

% %%%%%%%%%%%%%%%%%%

% \begin{figure}[!t] %[!htbp]
% \hspace*{-.12in}
% \centering \includegraphics[width=\columnwidth, height=1.95in]{}
% \caption{Profile of output disturbances} 
% \label{fig:Disturbance}
% \end{figure}

\section{Conclusion}\label{sec:conclusion}
We have proposed sufficient stability and performance conditions for linear parameter-varying systems with uncertain time-varying delays affected by external disturbances. The uncertain delay has been treated as a nominal delay plus a perturbed function, and in order to confine the perturbation in a stable domain, an input-output stability approach via the small-gain theorem results has been utilized. The sufficient conditions have been formulated in a linear matrix inequality framework using a Lyapunov-Krasovskii functional augmented with the descriptor method approach. Then, control synthesis results have been derived using a proper congruent transformation to provide stability and minimize the disturbance amplification in terms of the induced $\mathcal{L}_2$-norm performance specification of the closed-loop system. The mean arterial blood pressure regulation for critical hypotensive patients via automated drug administration has been studied to assess the performance and effectiveness of the proposed control algorithm. The final closed-loop simulation results have proven the potentials and superiority of the adopted methodology.

\section*{Acknowledgement}\label{sec:Acknow}
Financial support from the National Science Foundation under grant CMMI1437532 is gratefully acknowledged. The collaboration of the Resuscitation Research Laboratory (Dr. G. Kramer) at the University of Texas Medical Branch (UTMB), Galveston, Texas, in providing animal experiment data is gratefully acknowledged.

% if have a single appendix:
%\appendix[Proof of the Zonklar Equations]
% or
%\appendix  % for no appendix heading
% do not use \section anymore after \appendix, only \section*
% is possibly needed

% use appendices with more than one appendix
% then use \section to start each appendix
% you must declare a \section before using any
% \subsection or using \label (\appendices by itself
% starts a section numbered zero.)
%

% Can use something like this to put references on a page
% by themselves when using endfloat and the captionsoff option.
% \ifCLASSOPTIONcaptionsoff
%   \newpage
% \fi

%\bibliographystyle{IEEEtran}% such as plain-fa
%\bibliography{Ref}

\end{document}